\begin{document}
\mainmatter

\title{A note on the avoidability of binary patterns with variables and reversals\thanks{Please note that this work is mostly obsolete as a week ago Currie and Lafrance fully characterised the patterns with reversals in~\cite{CurLaf15}. The only novelty of this work is given by the aperiodicity restriction of the infinite words avoiding such patterns}}

\titlerunning{Palindromes avoidance} 

\author{Robert Merca\c{s}\thanks{Work supported by the P.R.I.M.E. programme of  DAAD co-funded by BMBF and EU's 7th Framework Programme (grant 605728).
}}

\authorrunning{Robert Merca\c{s}}

\institute{Department of Computer Science, Kiel University, Germany,\\
Department of Informatics, King's College London, UK\\
\email{RobertMercas@gmail.com}
}

\maketitle

\begin{abstract}
In this note we present a characterisation of all unary and binary patterns that do not only contain variables, but also reversals of their instances. These types of variables were studied recently in either more general or particular cases. We show that the results are not surprising at all in the general case, and extend the avoidability of these patterns to enforce aperiodic words.
\end{abstract}

\section{Introduction}

The \emph{pattern unavoidability} concept, was introduced by Bean, Ehrenfeucht and McNulty in~\cite{BeEhMcN} and by Zimin who used the terminology ``blocking sets of terms'' in~\cite{Zim84}. A pattern consisting of variables is said to be unavoidable over a $k$-letter alphabet, if every infinite word over such an alphabet contains an instance of the pattern. That is, there exists a factor of the infinite word which is obtained from the pattern through an assignment of non-empty words to the variables (each occurrence of a variable is substituted with the same word).

The unary patterns, or powers of a single variable $\alpha$, were investigated by Thue~\cite{Thu06,Thu12}: $\alpha$ is unavoidable, $\alpha\alpha$ is 2-unavoidable but 3-avoidable, and $\alpha^m$ with $m\geq 3$ is 2-avoidable. 
Schmidt proved that there are only finitely many binary patterns, or patterns over $E=\{\alpha,\beta\}$, that are 2-unavoidable~\cite{Sch86,Sch89}. Later on, Roth showed that there are no binary patterns of length six or more that are 2-unavoidable~\cite{Rot92}. The classification of unavoidable binary patterns was completed by Cassaigne~\cite{Cas93b} who showed that $\alpha\alpha\beta\beta\alpha$ is 2-avoidable.

After this moment, the concept of unavoidability was investigated in several other context. The ternary patterns were fully characterised in~\cite{Cas94The,Och06}, the binary patterns in the setting of partial words in~\cite{MaMe,BSMeSc09,BSMeRaWi,BSMeSiWe11,FBSLohSco13}, several variations of avoidability of patterns with restrictions on the length of the instances can be found in~\cite{RaShWa05}, while the binary patterns avoidable by cube-free words were characterised in~\cite{MerOchSamShu14} together with their growth rates. However, the topic of our work is mostly inspired by~\cite{RampSha05}, where the authors look at the avoidability of words and their reversals, and~\cite{DuMoScSh14} where the authors show that the pattern $\alpha\alpha\alpha^R$ is avoidable over a binary alphabet, and the work in~\cite{BisCurNow12,ManMueNow12,CurManNow15}, where  a more generalised form of avoidability, that of pseudo-repetitions, is investigated. 

In this work, we investigate the avoidability of binary patterns, when some of the variables might be reversed, which, as far as we know, it has not been done before. For example, instead of looking only at the pattern $\alpha\alpha$, we shall also investigate the pattern $\alpha \alpha^R$; this is obviously enough as other variations only consist of complements or mirror images. However, as most of these patterns are avoidable by trivial periodic words (as shown in~\cite{CurLaf15}), we extend a bit our interest and investigate also the cases for which infinite aperiodic words avoiding these patterns exist.

Our work is structured as follows. In the next section we present basic definitions and notations, as well as some preliminary observations. In Section~\ref{sec:unary} we sum up the characterisation for unary patterns, while in Section~\ref{sec:binary} we do the same for binary patterns with reverses.

\section{Definitions and Preliminaries}

Cassaigne's Chapter~3 of \cite{Loth02} provides background on unavoidable patterns, while the handbook itself contains detailed definitions on words. 

Let $\Sigma$ be a non-empty finite set of symbols called an \emph{alphabet}. Each element $0\in \Sigma$ is called a \emph{letter}. A \emph{word} is a sequence of letters from $\Sigma$. The \emph{empty word} is the sequence of length zero, denoted by $\varepsilon$. The set of all finite words~(respectively, non-empty finite words) over $\Sigma$ is denoted by $\Sigma^*$ (respectively, $\Sigma^+$).

A word $u$ is a \emph{factor} of a word $v$ if there exist $x$, $y$ such that $v=xuy$ (the factor $u$ is \emph{proper} if $u\neq \varepsilon$ and $u \neq v$). We say that $u$ is a \emph{prefix} of $v$ if $x=\varepsilon$ and a \emph{suffix} of $v$ if $y=\varepsilon$. The \emph{length} of $u$ is denoted by $|u|$ and represents the number of symbols in $u$. We denote by $u[i..j]$, where $0\leq i \leq j < |u|$, the factor of $u$ starting at position $i$ in $u$ and ending at position $j$, including. By $|u|_v$ we denote the number of distinct, possibly overlapping, occurrences of a factor $v$ in $u$. We denote by $u^R=u[|u|-1]\cdots u[1]u[0]$, the {\em reversal} or {\em mirror image} of a word $u$. A word $u$ is called a {\em palindrome} if $u=u^R$.

For a word $u$, the powers of $u$ are defined recursively by $u^0=\varepsilon$ and for $n\geq1$, $u^n=uu^{n-1}$. Furthermore, $\displaystyle\lim_{n\to\infty}u^n$ is denoted by $u^\omega$. For legibility, the 2-powers of a word are called squares, while 3-powers are called cubes. Furthermore, if $u=v^kv'$, where $v'$ is a prefix of $v$, we say that $u$ is a $\frac{k|v|+|v'|}{|v|}$-power.

Let $E$ be a non-empty finite set of symbols, distinct from $\Sigma$, whose elements are denoted by $\alpha, \beta, \gamma$, etc. Symbols in $E$ are called \emph{variables}, and words in $E^*$ are called {\em patterns}.
The \emph{pattern language} over $\Sigma$ associated with a pattern $p\in E^*$, denoted by $p(\Sigma^+)$, is the subset of $\Sigma^*$ containing all words of $\varphi(p)$, where $\varphi$ is any non-erasing morphism that maps each variable in $E$ to an arbitrary non-empty word from $\Sigma^+$. A word $w\in \Sigma^*$ \emph{meets} the pattern $p$ (or $p$ {\em occurs} in $w$) if for a factorization $w=xuy$, we have $u\in p(\Sigma^+)$. Otherwise, $w$ \emph{avoids} $p$. 

More precisely, let $p = \alpha_0 \cdots \alpha_m$, where $\alpha_i \in E$ for $i\in\{0, \ldots, m\}$.
Define an {\em occurrence} of $p$ in a word $w$ as a factor $u_0 \cdots u_m$ of $w$, where for $i, j \in \{0, \ldots, m\}$, if $\alpha_i = \alpha_j$, then $u_i = u_j$.
Stated differently, for all $i \in \{0, \ldots, m\}$, $u_i \subset \varphi(\alpha_i)$, where $\varphi$ is any non-erasing morphism from $E^*$ to $\Sigma^*$ as described earlier. These definitions extend to infinite words $w$ over $\Sigma$ which are functions from $\mathbb{N}$ to $\Sigma$.

Considering the pattern $p=\alpha\beta\beta\alpha$, the language associated with $p$ over the alphabet $\{0,1\}$ is $p(\{0,1\}^+)=\{uvvu\mid u,v\in\{0,1\}^+$. The word $001100$ meets $p$ (take $\varphi(\alpha)=00$ and $\varphi(\beta)=1$), while the word $01011$ avoids $p$.

Let $p$ and $p'$ be two patterns. If $p'$ meets $p$, then $p$ {\em divides} $p'$, which we denote by $p\mid p'$. For example, $\alpha\alpha \nmid \alpha\beta\alpha$ but $\alpha\alpha \mid \alpha \beta \alpha \beta$. When both $p\mid p'$ and $p'\mid p$ hold, the patterns $p$ and $p'$ are  \emph{equivalent}, and this happens when and only when they differ by a permutation of $E$. For instance, $\alpha\alpha$ and $\beta\beta$ are equivalent.

A pattern $p \in E^*$ is \emph{k-avoidable} if in $\Sigma^*$ there are infinitely many words that avoid $p$, where $\Sigma$ is a size $k$ alphabet. On the other hand, if every long enough word in $\Sigma^*$ meets $p$, then $p$ is $k$-{\em unavoidable} (unavoidable over $\Sigma$).
Finally, a pattern $p \in E^*$ which is $k$-avoidable for some $k$ is simply called \emph{avoidable}, and one which is $k$-unavoidable for every $k$ is called \emph{unavoidable}. The {\em avoidability index} of $p$ is the smallest $k$ such that $p$ is $k$-avoidable, or it is $\infty$ if $p$ is unavoidable. 

In the rest of this work, we only consider binary 
patterns, hence we fix 
$E=\{\alpha,\beta\}$. Moreover, we define $\overline{0} = 1$ and $\overline{1} = 0$, and, similarly, $\overline{\alpha}=\beta$ and $\overline{\beta}=\alpha$, as complementing letters and, respectively, variables.

\paragraph{Preliminaries.} In this paper we are interested in the avoidability of binary patterns in a more general setting. That is, we look at patterns formed not only from variables, but also from reversals. 
As it can be seen, the word $0011001$ has three occurrences of the pattern $\alpha\alpha$, but also has no fewer than six occurrences of the pattern $\alpha\alpha^R$, when $\alpha\in\{0,01, 001,1, 10\}$. Furthermore, it has no occurrence of $\alpha\alpha\alpha$, but one occurrence of $\alpha\alpha^R\alpha$ for $\alpha=01$.

\begin{remark}
A pattern of the form $\alpha\alpha^R$ is equivalent to an even length palindrome. 
\end{remark}

In~\cite{FicZam13},~\cite{DaMaMeMue14} and~\cite{DuMoScSh14} some results regarding the avoidability of palindromes under certain conditions have already been provided.

When considering a four letter alphabet, following a result of Pansiot~\cite{Pan83} we have that there exist infinite words that avoid palindromes. This is due to the fact that over a four letters alphabet there exists an infinite word that has the repetitive threshold $7/5$, thus does not contain any factors of the form $00$ or $010$, for $0,1$ letters, since these would create a $2$, respectively, a $3/2$-power.

When analysing ternary alphabets as to avoid all palindromes, therefore also factors of forms $00$ and $010$ for any letters $0,1$ of the alphabet, we get that the only infinite word that avoids palindromes is $(012)^{\omega}$ and all its suffixes.

For binary alphabets the avoidability of palindromes is not possible as every word of length 3 would contain one.

However, since $\alpha\alpha^R$ is an even length palindrome, the following is immediate:
\begin{remark}
Any square-free word will avoid all even length palindromes.
\end{remark}

Therefore, we already have an upper limit on our avoidability indices.

\section{Unary patterns}\label{sec:unary}
In this section we investigate the avoidability of patterns formed only from a variable and its reversal.

Obviously, when considering a unary alphabet no pattern is avoidable.
The results of Thue~\cite{Thu06,Thu12}, give us precise bounds for the cases when reversals do not occur. Squares are avoidable on a ternary alphabet, while for powers of at least three a binary alphabet is enough.

For the case of reversals, as seen above, a ternary alphabet is enough to avoid any word containing a variable and its reversal, that is a pattern of the form $\alpha\alpha^R$. On further investigation, we see that this is also the case for a binary alphabet, whenever we consider for example the word $(01)^{\omega}$.
Therefore, a first straightforward result is the following:
\begin{remark}\label{rem:unary_avoid}
Every pattern $p$ that has either $\alpha\alpha^R$ or $\alpha^R\alpha$ as factor is $2$-avoidable.
\end{remark}

However, both previously given words, $(01)^\omega$ and $(012)^\omega$, are periodic, thus not that interesting. Moreover, any infinite binary or ternary words avoiding such patterns are in fact suffixes of these two words. One direction of our investigation shall deal with the avoidability of these patterns in aperiodic words, e.\,g., words that are not ultimately periodic, thus of the form $uv^{\omega}$. 

A first step in this direction was made in~\cite{DuMoScSh14}, where the authors show that the pattern $\alpha\alpha\alpha^R$, which could be depicted in the English word $bepepper$ by taking $\alpha=ep$, is avoidable on a binary alphabet, e.\,g. Theorem~36. Furthermore, the same work conjectures that every binary aperiodic word avoiding this pattern has critical exponent $\geq 2+\varepsilon$, while the number of these words, grows between polynomial and exponential, fact shown recently in~\cite{CurRam15}.

It is also not that difficult to find a binary aperiodic infinite word that avoids the pattern $\alpha\alpha^R\alpha$. For this consider the binary word $\tau=(01)^{\omega}$. Next we ``double'' in $\tau$ a $1$ at positions exponentially far away from the first, and denote the newly obtained word $\tau'$. That is, if we inserted a $1$ at position $k$ in $\tau$ then at position $k-1$ we have a 1, and the next $1$ will be inserted at some position greater than $2k$ after an occurrence of another 1. We have:
$$\tau' = 01 \underline{1} 01 01 \underline{1} 01 01 01 01 \underline{1} 01 01 01 01 01 01 01 01 \underline{1}\cdots,$$
where the new inserted character is underlined.

\begin{lemma}
The word $\tau'$ avoids the pattern $\alpha\alpha^R\alpha$.
\end{lemma}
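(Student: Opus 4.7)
The plan is to argue by contradiction: assume some non-empty $u$ makes $w = uu^Ru$ occur as a factor of $\tau'$, and extract a contradiction. I rely on only two structural facts about the construction: (a) $\tau'$ contains no occurrence of $00$ and no occurrence of $111$, because each insertion only duplicates a $1$ already present in $(01)^\omega$; and (b) if $q_1 < q_2 < \cdots$ are the positions of the successive $11$'s of $\tau'$, the gaps $g_i := q_{i+1} - q_i$ are strictly increasing, since each inserted $1$ is placed more than twice as far from the start as the previous one (and the same inequality yields $q_1 < g_1 - 1$, which handles an edge case below).

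The first step inspects the two internal junctions of $w$. Positions $|u|-1, |u|$ of $w$ both carry the last letter of $u$, and positions $2|u|-1, 2|u|$ both carry its first letter. Property (a) forbids $00$, so both duplicated letters must be $1$; hence $u$ begins and ends with $1$. I then split on whether $u$ has an internal occurrence of $11$.

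If $u$ has at least one internal $11$, let $s_m$ be the rightmost such position. The symmetry of $w$ produces three $11$-occurrences at positions $s_m$, $|u|-1$, and $2|u|-2-s_m$ (the mirror image of $s_m$ inside $u^R$). Using that $\tau'$ avoids $111$ gives $s_m \le |u|-3$, and a short verification shows that these three occurrences really are consecutive among the $11$'s of $w$; the two gaps between them both equal $|u|-1-s_m$. But any window of consecutive $11$-gaps inherited from $\tau'$ is strictly increasing by (b), so two equal consecutive gaps are impossible.

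If $u$ has no internal $11$, then $u$ is alternating, and beginning/ending with $1$ forces $u = 1(01)^k$, a palindrome. Hence $w = u^3$, whose only $11$'s lie at positions $2k$ and $4k+1$ at distance $|u| = 2k+1$. This distance must equal some $g_i$ in $\tau'$, which uniquely pins down $i$ and forces $w$ to begin at position $q_i - 2k$ of $\tau'$. A single alignment argument now finishes the proof: the prefix $w[0..2k-1]$ is alternating, hence $\tau'[q_i-2k..q_i-1]$ must contain no $11$; for $i \ge 2$, the inequality $g_{i-1} \le g_i - 1 = 2k$ forces $q_{i-1} \in [q_i-2k,\,q_i-1]$, producing a forbidden $11$ there, and for $i = 1$ the construction already guarantees $q_1 - 2k < 0$, so $w$ cannot even fit in $\tau'$. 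This alignment bookkeeping is the only step that requires care, but it reduces cleanly to the two global properties (a) and (b) of $\tau'$ recorded above.
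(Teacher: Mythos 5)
Your argument is correct and takes essentially the same route as the paper's proof: both split on whether $11$ occurs inside $u$, kill the internal case via the (strictly, in fact exponentially) growing gaps between consecutive $11$'s of $\tau'$, and in the remaining case use the doubling condition on insertion positions to show the occurrence would have to sit impossibly close to the start of the word. Your write-up is simply more detailed than the paper's sketch --- the no-$00$ junction argument replaces the appeal to $(01)^\omega$ avoiding the pattern, the two-equal-consecutive-gaps contradiction makes precise the paper's ``contradiction is easily reached'', and the $i\ge 2$ versus $i=1$ alignment replaces the paper's ``$u$ must start at the beginning of the word, check the prefix of length $10$''.
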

\begin{proof}
We already know from Remark~\ref{rem:unary_avoid} that the pattern  $p=\alpha\alpha^R\alpha$ is avoidable by $\tau$. Thus in order for $p$ to meet $\tau'$ it must be that one occurrence of it in the word would contain at least one factor $11$. Denote first such occurrence by $uu^Ru$. If $11$ is a factor of $u$, then a contradiction is easily reached given the lengths of the factors (the lengths between each two occurrences of $11$ are increasing exponentially according to the definition). Thus assume now that $11$ occurs only twice in $uu^Ru$, that is $u$ starts and ends in $1$ (11 occurs at the limit between $u$ and $u^R$). However, in this case we have that the length of $u^R$ is double the length from the beginning of the word to the end of the first $u$. To not reach a contradiction with the fact that $u$ contains no 11 as a factor, it must be that $u$ starts at the beginning of the word. A simple check of the prefix of length 10 of $\tau'$ proves that this is not the case. This concludes our proof.
\qed\end{proof}

The above discussion fully characterises all unary patterns with reverses.

\begin{theorem}\label{thm:unary_pat}
Let $p\in\{\alpha,\alpha^R\}^+$ be a pattern. Then
\begin{enumerate}
\item
$p$ is unavoidable, whenever $p\in\{\alpha,\alpha^R\}$;
\item
there exist infinite aperiodic ternary words that avoid $p$, whenever $|p|>1$.
\item 
$p$ is avoidable over a binary alphabet, whenever $p$ has $\alpha\alpha^R$ or $\alpha^R\alpha$ as factor;
\item
there exist infinite aperiodic binary words that avoid $p$, whenever $|p|>2$.
\end{enumerate}
\end{theorem}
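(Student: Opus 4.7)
The plan is to verify the four items in turn. Item~(1) is immediate: every single letter is an instance of $\alpha$ (or $\alpha^R$), so a one-variable pattern is met by any non-empty word. For item~(2), every $p$ with $|p|>1$ has a length-$2$ factor over $\{\alpha,\alpha^R\}$, and every instance of such a factor contains a square---literally for $\alpha\alpha$ and $\alpha^R\alpha^R$, and at the central letter-repetition of the even palindrome $uu^R$ for $\alpha\alpha^R$ and $\alpha^R\alpha$. Any square-free word therefore avoids $p$, and Thue's classical aperiodic ternary square-free word is the witness. Item~(3) is a direct restatement of Remark~\ref{rem:unary_avoid}, since $(01)^\omega$ is $aa$-free and hence contains no instance of $\alpha\alpha^R$ or $\alpha^R\alpha$.

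For item~(4), I would split on whether $\alpha\alpha^R$ or $\alpha^R\alpha$ occurs in $p$. If neither does, $p=\alpha^n$ or $(\alpha^R)^n$ with $n\ge 3$; the two patterns induce the same language (the $n$-th powers), so the cube-free Thue--Morse word works. Otherwise $|p|\ge 3$, and some length-$3$ factor of $p$ lies in one of the three language-equivalence classes involving both $\alpha$ and $\alpha^R$: (i)~$\alpha\alpha\alpha^R\sim\alpha^R\alpha^R\alpha$, (ii)~$\alpha\alpha^R\alpha\sim\alpha^R\alpha\alpha^R$, or (iii)~$\alpha\alpha^R\alpha^R\sim\alpha^R\alpha\alpha$. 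For~(i) I would invoke the aperiodic binary word of~\cite{DuMoScSh14,CurRam15}; for~(ii) the word $\tau'$ of the preceding lemma. Since avoiding a factor of $p$ avoids $p$, this settles every $p$ whose length-$3$ factor lies in~(i) or~(ii).

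The hard part will be class~(iii): producing a binary aperiodic word avoiding $\alpha\alpha^R\alpha^R$. Direct reuse of $\tau'$ fails, since already $011010$---the instance for $u=01$---is a prefix of $\tau'$. My plan is to exploit the reversal identity $(uuu^R)^R = uu^Ru^R$: starting from any binary aperiodic word $w$ that avoids $\alpha\alpha\alpha^R$, apply K\"onig's lemma to the factorial closure of $\{v^R : v\text{ is a factor of }w\}$ to obtain an infinite word $w'$ all of whose factors are reverses of factors of $w$; then $w'$ avoids class~(iii), and aperiodicity can be transferred by a short subword-complexity argument using~\cite{CurRam15}. A more constructive alternative is a $\tau'$-style scheme in which the block inserted at each exponentially spaced position of $(01)^\omega$ is chosen so that any candidate occurrence $uu^Ru^R$ forces two irregularities at distance exactly $|u|$, contradicting the exponential spacing.
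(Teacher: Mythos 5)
Items (1)--(3) of your argument, and most of item (4), follow the paper's route exactly: Thue's square-free ternary word for (2) (an instance of $\alpha\alpha^R$ is an even palindrome and hence contains a square at its centre), the word $(01)^\omega$ of Remark~\ref{rem:unary_avoid} for (3), the cube-free Thue--Morse word for $p=\alpha^n$ or $(\alpha^R)^n$, the word of~\cite{DuMoScSh14,CurRam15} for the class of $\alpha\alpha\alpha^R\sim\alpha^R\alpha^R\alpha$, and $\tau'$ for the class of $\alpha\alpha^R\alpha\sim\alpha^R\alpha\alpha^R$. Your observation that $\tau'$ cannot be reused for $\alpha\alpha^R\alpha^R$ (its prefix $011010$ is an instance with $u=01$) is correct and is a point the paper never addresses explicitly: the paper disposes of this class only through the blanket convention that mirror images of patterns are treated as equivalent, and exhibits no binary aperiodic word for it.

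The genuine gap is that your treatment of this remaining class, $\alpha\alpha^R\alpha^R\sim\alpha^R\alpha\alpha$ (note that $p=\alpha\alpha^R\alpha^R$ itself has $|p|>2$, so the case cannot be avoided), is a plan rather than a proof, and the key step as stated would fail. The K\"onig-lemma/reversal-closure construction does yield an infinite word all of whose factors are reverses of factors of a word avoiding $\alpha\alpha\alpha^R$, and any such word avoids $\alpha\alpha^R\alpha^R$; but the claim that ``aperiodicity can be transferred by a short subword-complexity argument using~\cite{CurRam15}'' is unsupported, and complexity alone cannot deliver it. A factorial, right-extendable binary language can have complexity of order $e^{c\sqrt{n}}$ while every right-infinite word whose factors all lie in it is ultimately periodic: take the words in which the lengths of the $0$-blocks between consecutive $1$s strictly decrease from left to right; its reversal (increasing gaps) does admit aperiodic words such as $1010^210^31\cdots$, which is exactly the left/right asymmetry your transfer argument must exclude. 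So reversing the factor set of a one-sided aperiodic word avoiding $\alpha\alpha\alpha^R$ does not automatically produce a one-sided aperiodic word avoiding $\alpha\alpha^R\alpha^R$; one needs either additional structural information about the words avoiding $\alpha\alpha\alpha^R$ (recurrence, a two-sided aperiodic avoiding word, or the explicit descriptions in~\cite{CurRam15,CurLaf15}) or a concrete construction with a verification -- your ``constructive alternative'' specifies neither the inserted blocks nor the argument, so as it stands item (4) is not established for this class.
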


\section{Binary patterns}\label{sec:binary}

In order to start the investigation of binary patterns together with reversals, we have to first recall the results characterising the classical avoidability of binary patterns. For more details see~\cite[Chapter 3]{Loth02}.
\begin{theorem}
\label{fullbinarypatterns}
Regarding avoidability, binary patterns fall into three categories:
\begin{enumerate}
\item \label{item1}
The binary patterns  $\varepsilon, \alpha, \alpha \beta, \alpha \beta \alpha$, and their complements, are unavoidable (or have avoidability index $\infty$).
\item \label{item2}
The binary patterns $\alpha\alpha$, $\alpha\alpha\beta$, $\alpha\alpha\beta\alpha$, $\alpha\alpha\beta\beta$, $\alpha\beta\alpha\beta$, $\alpha\beta\beta\alpha$, $\alpha\alpha\beta\alpha\alpha$, $\alpha\alpha\beta\alpha\beta$, their reverses, and complements, have avoidability index 3.
\item \label{item3}
All other binary patterns, and in particular all binary patterns of length six or more, have avoidability index 2.
\end{enumerate}
\end{theorem}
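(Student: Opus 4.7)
The plan is to treat each of the three items separately, invoking the classical characterisations due to Zimin, Schmidt, Roth, and Cassaigne whose combination yields this statement.

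For item~\ref{item1}, I would apply Zimin's theorem: a pattern is unavoidable if and only if it is a factor of some Zimin word $Z_n$, where $Z_1 = \alpha_1$ and $Z_{n+1} = Z_n\, \alpha_{n+1}\, Z_n$. A pattern using only two variables can only be a factor of $Z_2 = \alpha\beta\alpha$ (or of its complement $\beta\alpha\beta$, which is equivalent modulo a permutation of~$E$). The factors of $\alpha\beta\alpha$ are exactly $\varepsilon,\alpha,\beta,\alpha\beta,\beta\alpha,\alpha\beta\alpha$, which coincide with the patterns listed in item~\ref{item1} up to the complement operation. A posteriori, every other binary pattern must contain two occurrences of some variable separated by a factor that already contains a repetition of a variable, which is the structural reason it falls outside the Zimin hierarchy.

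For item~\ref{item2}, for each listed pattern I would prove both 2\nobreakdash-unavoidability and 3\nobreakdash-avoidability. The 2\nobreakdash-unavoidability is established by a direct combinatorial argument in the spirit of Thue's proof that every binary word of length at least four contains a square: one shows that any long enough binary word forces an occurrence of the given pattern by case analysis on the positions of repeated letters. The 3\nobreakdash-avoidability is obtained by exhibiting an explicit infinite ternary word avoiding the pattern, typically as the fixed point of a carefully chosen morphism; Thue's square-free ternary word handles $\alpha\alpha$ and any pattern containing a square, and Schmidt provides suitable morphisms for the remaining patterns of the list.

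For item~\ref{item3}, I would invoke Roth's theorem, which asserts that no binary pattern of length six or more is 2\nobreakdash-unavoidable, together with Cassaigne's 2\nobreakdash-avoidability of $\alpha\alpha\beta\beta\alpha$, which is the only remaining pattern of length five not covered by item~\ref{item2}. Each of these 2\nobreakdash-avoidability proofs constructs a morphism $h:\{0,1\}^* \to \{0,1\}^*$ whose iterative fixed point $h^\omega(0)$ avoids the target pattern, together with a verification that no occurrence survives the iteration.

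The main obstacle is Cassaigne's result on $\alpha\alpha\beta\beta\alpha$: the pattern is short and sits close to the boundary of unavoidability, so the avoiding binary morphism must be delicately designed and the verification that the generated infinite word contains no instance of the pattern requires a subtle, and in part computer-assisted, case analysis over the lengths and positions of the images $\varphi(\alpha)$ and $\varphi(\beta)$ relative to the morphic structure.
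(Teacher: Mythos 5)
First, note that the paper does not prove this theorem at all: it is recalled as the classical classification of binary patterns and attributed to Thue, Schmidt, Roth and Cassaigne, with the reader pointed to Cassaigne's Chapter~3 of Lothaire for details. So the relevant comparison is with that classical proof, and your reconstruction of it has a genuine gap in the third item. Roth's theorem only covers binary patterns of length six or more, and Cassaigne's contribution is the 2-avoidability of $\alpha\alpha\beta\beta\alpha$; but ``all other binary patterns'' also includes patterns of length at most five that are neither in item~1 nor in item~2, and these are not reached by either of the two results you invoke. Concretely, $\alpha\alpha\alpha$ (and every pattern divisible by it, e.g.\ $\alpha\alpha\alpha\beta$) needs Thue's cube-free binary word together with the divisibility argument, $\alpha\beta\alpha\beta\alpha$ needs a separate argument (its instances $uvuvu$ are overlaps, so the Thue--Morse word works), and $\alpha\beta\alpha\alpha\beta$ is not divisible by any shorter 2-avoidable pattern and is not avoided by Thue--Morse (which contains $01001$), so it requires its own construction, historically due to Schmidt. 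Your claim that $\alpha\alpha\beta\beta\alpha$ is ``the only remaining pattern of length five not covered by item~2'' is therefore false, and as written the proposal does not establish index~2 for these patterns.

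Two smaller inaccuracies are worth fixing even though they do not sink the binary case. Zimin's criterion is about divisibility (the pattern must divide, i.e.\ be met by, the Zimin word $Z_n$), not about being a literal factor of $Z_n$; for two variables the divisors of $\alpha\beta\alpha$ happen to coincide with its factors up to renaming, which is why your conclusion for item~1 is still correct, but the general statement you quote is wrong (e.g.\ $\alpha\beta\gamma\alpha\beta$ divides $Z_3$ without being a factor of any Zimin word up to renaming). In item~2 no ``remaining patterns'' need Schmidt-style morphisms for 3-avoidability: every pattern in that list has instances containing a square ($\alpha\beta\alpha\beta$ itself being the square of $\alpha\beta$), so Thue's square-free ternary word avoids all of them uniformly; Schmidt's actual role, and the genuinely hard half of item~2, is the 2-unavoidability of patterns such as $\alpha\alpha\beta\alpha\alpha$ and $\alpha\alpha\beta\alpha\beta$, which your sketch passes over rather quickly as ``case analysis''.
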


 Using further the results of Theorem~\ref{thm:unary_pat}, we shall establish the avoidability of all binary patterns also in the case when reversals are present. 

First, for {\bf item~\ref{item1}} of Theorem~\ref{fullbinarypatterns}, all of the patterns are trivially unavoidable on a unary alphabet, while already any factor of the form $axa$, where $a$ is a letter and $x$ is any non-empty word not containing $a$, meets every one of them.

Now, from item 2 of Theorem~\ref{thm:unary_pat}, we conclude that for all of the patterns at {\bf item~\ref{item2}} of Theorem~\ref{fullbinarypatterns}, except for $\alpha\beta\alpha\beta$,  there exists an aperiodic ternary word avoiding them, no matter how we replace $\alpha$ by $\alpha^R$ or $\beta$ by $\beta^R$. 
%
%
We just have to see now if $3$ is in fact the smallest index possible.

\begin{remark}\label{rem1}
Since $\alpha\alpha^R$ is avoidable by $(01)^\omega$, all patterns  $\alpha\alpha$, $\alpha\alpha\beta$, $\alpha\alpha\beta\beta$, $\beta\alpha\alpha\beta$, and $\alpha\alpha\beta\alpha\alpha$, their reverses, and complements, have avoidability index 2, whenever  one $\alpha$ is replaced by $\alpha^R$. This is also true for $\alpha^R\alpha\beta\alpha$, $\alpha\alpha^R\beta\alpha$, $\alpha\alpha^R\beta\alpha^R\alpha$, $\alpha^R\alpha\beta\alpha^R\alpha$, and all variations of $\alpha\alpha\beta\alpha\beta$ with one of the first two $\alpha$'s reversed.
\end{remark}

In this context of avoidability when periodicity is allowed, we still have to analyse $\alpha\alpha\beta\alpha^R$, $\alpha\beta\beta\alpha^R$, $\alpha\alpha\beta\alpha^R\alpha^R$, and the variations of $\alpha\beta\alpha\beta$ and $\alpha\alpha\beta\alpha\beta$.

For the aperiodic case, obviously, as $\beta$ can be chosen to be an arbitrary word, none of the first three patterns of {\bf item~\ref{item2}} of Theorem~\ref{fullbinarypatterns} is avoidable by an aperiodic binary alphabet wherever $\alpha^R$ occurs.

For $\alpha\alpha\beta\alpha$ and $\alpha\alpha\beta\alpha\alpha$, it is immediate that since every infinite binary word contains $00$ or $01010$ as recurring factors (or their complements), they and any of their variations with reverses, other than the ones of Remark~\ref{rem1},  (take $0$ or $01$ as the image of $\alpha$) will occur in every binary infinite word. Thus none of these is avoidable by either an ultimately periodic or aperiodic infinite binary word.  

For $\alpha^R\beta^R\alpha\beta$ see that a word avoiding it should not contain any unary $4$-powers or square of the form $0^i1^j0^i1^j$.

\begin{lemma}\label{lem:avoid1}
The pattern $\alpha^R\beta^R\alpha\beta$ has avoidability index $3$ and there exists an infinite aperiodic ternary word avoiding the pattern.
\end{lemma}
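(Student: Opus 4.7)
My plan is to split the proof into two parts: showing that no infinite binary word avoids $\alpha^R\beta^R\alpha\beta$ (the avoidability index is at least $3$), and exhibiting an aperiodic ternary witness (the index is at most $3$).

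For the lower bound, suppose $w$ is an infinite binary word avoiding the pattern, and derive a contradiction. First I would specialise the pattern to extract structural constraints on $w$: taking $\alpha=\beta$ a single letter forbids $0^4$ and $1^4$; taking $\alpha = 01$ and $\beta = 0$ yields the instance $100010$, which appears whenever $w$ has a $0$-run of length three that is flanked by $1$s and followed by a $0$, so in an infinite word every $0$-run past a finite prefix has length in $\{1,2\}$. Writing $w = 0^{a_1}1^{b_1}0^{a_2}1^{b_2}\cdots$ with $a_k \in \{1,2\}$ and $b_k \in \{1,2,3\}$ for large $k$, avoidance of $0^i1^j0^i1^j$ yields the transition constraint (C1): for each $k\geq 2$, $a_{k-1} < a_k$ or $b_k < b_{k-1}$; avoidance of $1^i0^j1^i0^j$ yields (C2): $b_{k-1} < b_k$ or $a_{k+1} < a_k$. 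I would then enumerate the admissible consecutive triples $((a_{k-1},b_{k-1}),(a_k,b_k),(a_{k+1},b_{k+1}))$ satisfying both (C1) and (C2) at index $k$. A short case analysis shows that the middle entry of every admissible triple lies in $\{(2,2),(2,3)\}$ while its first entry lies in $\{(1,1),(1,2),(1,3),(2,3)\}$; for $k \geq 2$ the pair $(a_k,b_k)$ plays both roles in consecutive triples, forcing $(a_k,b_k) = (2,3)$ for all $k \geq 2$. But then the transition $(2,3) \to (2,3)$ violates (C1), yielding the required contradiction.

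For the upper bound, I would first observe that the periodic ternary word $(012)^\omega$ avoids $\alpha^R\beta^R\alpha\beta$: any factor of length at least two has consecutive letters increasing by $1$ modulo $3$, so its reversal has them decreasing by $1$ modulo $3$ and is not a factor of $(012)^\omega$. Hence any instance $u^R v^R u v$ inside $(012)^\omega$ must have $|u|=|v|=1$, reducing to a length-four square $abab$; but the length-four factors of $(012)^\omega$ are exactly $\{0120, 1201, 2012\}$, none of which is a square. To obtain an aperiodic witness I would modify $(012)^\omega$ in the spirit of the $\tau'$ construction from the previous section, duplicating the letter at an exponentially-growing sequence of positions to produce an aperiodic word $w$. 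Any putative instance of $\alpha^R\beta^R\alpha\beta$ in $w$ either lies in a clean periodic segment (already ruled out) or straddles a defect; by the exponential spacing it contains at most one defect, so verification reduces to a finite local check around each insertion.

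The hardest part will be this final verification step: ruling out pattern instances that straddle a duplicated letter. The key leverage is that outside defect neighbourhoods, $u^R$ with $|u|\geq 2$ is not a factor of the underlying $(012)^\omega$, so in any candidate instance in $w$ both $u$ and $u^R$ must lie near the same defect; exponentially-growing spacing then bounds the admissible lengths $|u|, |v|$, and only finitely many local configurations around each insertion need to be ruled out.
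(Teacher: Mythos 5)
Your overall architecture (binary impossibility plus an aperiodic ternary perturbation of a periodic word) matches the paper's, but the lower-bound argument as written has a genuine gap at the run-length restriction. Forbidding the instance $100010$ (from $\alpha=01$, $\beta=0$) does \emph{not} give you ``every $0$-run past a finite prefix has length in $\{1,2\}$'': a $0$-run of length three can be flanked by $1$s and then followed by another $1$, and the word $100011$ contains no occurrence of $\alpha^R\beta^R\alpha\beta$ at all, so no single forbidden factor rules it out. Excluding $10001$ requires comparing the lengths of the neighbouring runs and an eventual-descent argument; this is exactly what the paper does, writing the continuation as $10001^i0^j1^k0^\ell$ and deriving $i>k$, then $j>\ell$, and so on until the strictly decreasing run lengths give a contradiction. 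Since your triple enumeration explicitly relies on $a_k\in\{1,2\}$ (you use $a_{k+1}\leq 2$ to kill the disjunct $a_k<a_{k+1}$), the hole propagates through the rest of the case analysis. A secondary slip: if ``admissible'' really means only (C1) and (C2) at index $k$, then the middle entry $(2,1)$ is admissible (e.g.\ the triple $((1,1),(2,1),(1,1))$ satisfies both), so your stated enumeration implicitly also needs (C1) at index $k+1$; with that constraint added your sets, and the forcing of $(2,3)$, do come out right.

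The good news is that the gap is repairable inside your own framework, and the repair essentially reproduces the paper's proof: keep only the bound $a_k,b_k\in\{1,2,3\}$ coming from the absence of $4$-powers, and note that if ever $b_{k-1}\geq b_k$, then (C2) at $k$ forces $a_k>a_{k+1}$, then (C1) at $k+1$ forces $b_k>b_{k+1}$, then (C2) at $k+1$ forces $a_{k+1}>a_{k+2}$, and so on, so both run-length sequences would decrease strictly forever, which is impossible; hence $b_{k-1}<b_k$ for all large $k$, contradicting $b_k\leq 3$. (The alternating tail $(01)^\omega$ needs no separate treatment in this formulation, since constant runs violate (C1) -- indeed $0101$ is already an instance with $\alpha\mapsto0$, $\beta\mapsto1$.) On the upper bound, your plan (duplicating letters of $(012)^\omega$ at exponentially spaced positions) parallels the paper's word $\tau''$, obtained by inserting a $2$ before each $0$ of $\tau'$; but the ``finite local check'' you defer is precisely the content of the paper's verification (no length-$4$ squares, and the only factor of length at least $2$ whose reversal is also a factor is $11$, which is always preceded by $0$ and followed by $2$), so as written this half is a plausible plan rather than a proof, even though the leverage you identify -- reversed factors must live near a single defect, and the spacing bounds $|u|$ and $|v|$ -- is the right one.
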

\begin{proof}
Assume this is not the case and there exists a binary infinite word $\gamma$ that avoids it. According to the previous remark $\gamma$ contains no unary $4$-powers. Assume first that $\gamma$ has $10001$ as a factor (the case of $01110$ is symmetrical). Then we can represent a factor starting with this prefix as $10001^i0^j1^k0^\ell$ with $i,j,k,\ell<4$. Observe that in this case, if $i\leq k$, then we can take the image of $\alpha$ to be $0^j$ and that of $\beta$ to be $1^i$, and we would reach a contradiction. Continuing the reasoning, we get in the same manner that $j>\ell$, and so forth we would end up with a factor of the form $1111$ or reach a contradiction. 

We next assume that $\gamma$ has $1001$ as a factor (the case of $0110$ is symmetrical). In this case we get a factor of the form $1001^i0^j1^k0^\ell$, where $i>k$, $2\geq j > \ell=1$, and so forth, reaching once more a contradiction.

Therefore, we can assume that $\gamma$ has $(01)^\omega$ as a suffix. However, in this case choosing the image of $\alpha$ to be $010$ and that of $\beta$ to be $1$, we found an instance of the pattern in the word, which is a contradiction with our initial assumption.

To see that an infinite aperiodic ternary word avoiding the pattern exists, consider the word $\tau''$ obtained from $\tau'$ by insertion of a $2$ before each $0$:
$$\tau'' = 201 \underline{1} 201 201 \underline{1} 201 201 201 201 \underline{1} 201 201 201 201 201 201 201 201 \underline{1}\cdots.$$

First we observe that no length $4$ square exists in this word, thus either the image of $\alpha$ or that of $\beta$ has to have length greater than $1$. However, since the only length greater than $1$ factor of $\tau''$ which also occurs as a reverse is $11$, we conclude that in fact, this image should consist of $11$, while the other should be a single letter. Since $11$ is always preceded by $0$ and followed by $2$ such a scenario is not possible, and thus we conclude that $\tau''$ avoids $\alpha^R\beta^R\alpha\beta$.
\qed\end{proof}

Moreover, it is well known that every infinite binary word contains the square of a word of length greater than $1$, see, e.\,g.,~\cite{RaShWa05}. Thus  $\alpha^R\beta\alpha\beta$ is not avoidable on a binary alphabet, even for ultimately periodic words (take the image of $\alpha$ to have length 1). 
Next we recall a result from~\cite{DaMaMeMue14}.

\begin{theorem}\label{useful}
Over a ternary alphabet there exist infinitely long words that avoid all squares of words with length at least 2 and palindromes of lengths 3 and longer.
\end{theorem}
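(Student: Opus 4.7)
The plan is to reduce the palindrome-avoidance condition to a finite list of short forbidden factors, and then exhibit the desired word as the fixed point of a suitable uniform morphism over $\{0,1,2\}$. First I would note that any odd palindrome of length $2n+1 \geq 3$ contains the length-$3$ palindrome $w[n-1]w[n]w[n+1]$ at its centre, so forbidding length-$3$ palindromes automatically kills every odd palindrome of length $\geq 3$. For an even palindrome of length $2n \geq 4$: if $n=2$ we have a length-$4$ palindrome directly, and if $n \geq 3$ the four central letters $w[n-2]w[n-1]w[n]w[n+1]$ already form a length-$4$ palindrome $abba$. Hence the target condition is equivalent to simultaneously avoiding (i) every length-$3$ palindrome, (ii) every length-$4$ palindrome, and (iii) every square $ww$ with $|w| \geq 2$.

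With this reduction in hand, the construction would proceed by searching for a uniform morphism $\varphi: \{0,1,2\}^* \to \{0,1,2\}^*$ of some fixed block length $k$, admitting a fixed point $\varphi^\omega(0)$ (so that $\varphi(0)$ must begin with $0$), whose iterates cumulatively avoid (i)--(iii). Verification would use the standard desubstitution technique: assuming a forbidden factor $F$ occurs in $\varphi^\omega(0)$, one analyses the alignment of $F$ with the $\varphi$-blocks and shows that either $F$ is short (length below an explicit bound $N = N(k)$) and can be ruled out by direct inspection, or $F$ forces a strictly shorter forbidden factor in the preimage word, yielding an infinite descent. The residual finite check reduces to enumerating factors of $\varphi^j(0)$ for small $j$ and verifying mechanically that no factor of length $\leq N$ is forbidden.

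The main obstacle is selecting the morphism. Natural candidates that produce strongly squarefree fixed points, such as Thue's $0 \mapsto 012$, $1 \mapsto 02$, $2 \mapsto 1$, fail here because the resulting infinite word immediately contains length-$3$ palindromes (for instance, $010$ appears). One must tailor $\varphi$ so that each image $\varphi(a)$ is internally free of length-$3$ and length-$4$ palindromes, \emph{and} no such palindromic factor can straddle the boundary between two adjacent images $\varphi(a)\varphi(b)$ for any $a,b$. Getting both clauses to hold simultaneously with squarefreeness is the genuinely delicate part and typically requires a computer search.

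If a direct morphism search proves unwieldy, I would fall back on a non-constructive argument via K\"onig's lemma: establish by an entropy/counting bound that the number of length-$n$ ternary words avoiding (i)--(iii) grows without bound (by showing that for each such word of length $n$ one can extend by at least one letter, except possibly on a controllable set), whence an infinite word with the required properties exists. This route loses the explicit description but avoids the combinatorial case analysis inherent to designing the morphism.
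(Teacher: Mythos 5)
Your reduction of the palindrome condition is fine: forbidding the length-$3$ and length-$4$ palindromes at the centre does kill all palindromes of length at least $3$, and your general framework (a uniform morphism plus a desubstitution/synchronisation argument, with a finite check for short forbidden factors) is exactly the kind of argument the paper relies on. But as it stands your text is a plan, not a proof: the entire content of the theorem is the existence of a concrete morphism making the plan work, and you explicitly defer that to ``a computer search'' without producing one. The paper (quoting~\cite{DaMaMeMue14}) exhibits the $12$-uniform morphism $\psi(0)=011220012201$, $\psi(1)=122001120012$, $\psi(2)=200112201120$, applied to \emph{any} infinite square-free ternary word (not a fixed point of $\psi$ itself, which sidesteps your prolongability requirement); the verification then checks that palindromes of length $3$ and $4$ would lie inside $\psi(u)$ for a length-$2$ factor $u$, that squares $ww$ with $|w|\le 22$ would lie inside the image of a length-$5$ factor, and that a longer square forces, via the fact that each $\psi(x)$ occurs only at positions divisible by $12$, a square in the square-free preimage. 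None of these steps, nor any substitute for them, appears in your proposal.

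The fallback you offer does not close the gap either. A K\"onig's lemma argument needs arbitrarily long finite words with the property (prefix-closedness then gives the infinite word), and your route to that --- ``every good word can be extended, except on a controllable set'' --- is precisely what is unproven; in avoidance problems of this type maximal non-extendable good words do occur, so extendability cannot be asserted without an argument, and the ``entropy/counting bound'' you invoke is not sketched even in outline. So the genuine missing idea is the explicit construction (or an actual growth estimate), which is where all the work of the theorem lives.
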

In particular, the morphism $\psi$, that is defined by
    \begin{align*}
    \hspace{-1.5em}
        \psi(0) &= 011220012201, \qquad
        \psi(1) &= 122001120012, \qquad
        \psi(2) &= 200112201120,
    \end{align*}
maps any infinite square-free ternary word to a word with the desired property. Let us denote such a word obtained by the application of $\psi$ as $\sigma$.

\begin{lemma}\label{lem:aRbab}
The pattern $\alpha^R\beta\alpha\beta$ has avoidability index $3$ and there exists an infinite aperiodic ternary word avoiding the pattern. 
\end{lemma}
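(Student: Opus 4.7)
The plan is to establish the avoidability index equals three via two bounds: the pattern is met in every infinite binary word (giving the lower bound three), and the ternary word $\sigma$ provided by Theorem~\ref{useful} both avoids the pattern and is aperiodic. For the lower bound, I would recall from the paragraph preceding the lemma that every infinite binary word contains a square $XX$ with $|X|\geq 2$. Decomposing $X=aY$ with $a$ a single letter and $Y$ a nonempty word, the factor $XX=aYaY$ is an occurrence of $\alpha\beta\alpha\beta$ under the morphism $\varphi(\alpha)=a$, $\varphi(\beta)=Y$. Since $|\varphi(\alpha)|=1$ gives $\varphi(\alpha)^R=\varphi(\alpha)$, the same factor also occurs as $\alpha^R\beta\alpha\beta$, so the pattern is $2$-unavoidable.

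For the upper bound I would take any infinite square-free ternary word $s$ and set $\sigma=\psi(s)$ as in Theorem~\ref{useful}, so that $\sigma$ contains neither a palindrome of length at least $3$ nor a square whose period is at least $2$. The word $\sigma$ is aperiodic: because $\psi$ is $12$-uniform, an ultimate period of $\sigma$ would descend to an ultimate period of $s$, forcing a square in $s$. It then remains to show that $\sigma$ avoids $\alpha^R\beta\alpha\beta$. Suppose for contradiction that $w=u^R v u v$ is a factor of $\sigma$, with $u=\varphi(\alpha)$ and $v=\varphi(\beta)$ nonempty; set $n=|u|$ and $m=|v|$.

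The argument splits on these lengths. If $n=1$, then $u^R=u$, so $w=(uv)^2$ is a square of period $|uv|\geq 2$, a factor $\sigma$ forbids. If $m=1$, then the length-$3$ factor $w[n-1..n+1]=u[0]\,v\,u[0]$ is a palindrome, again forbidden. In the remaining case $n,m\geq 2$ I would first rule out palindromic $u$ and $v$: if $u=u^R$ then $w=(uv)^2$ is a forbidden square of period $n+m\geq 4$, while if $v=v^R$ the length-$(m+2)$ factor $u[0]\,v\,u[0]$ is a forbidden palindrome; hence neither $u$ nor $v$ is a palindrome. Using that $\sigma$ has no length-$3$ palindrome, so $\sigma[i]\neq\sigma[i+2]$ everywhere, combined with the restriction to three letters, a case analysis on the first letters of $u$ and $v$, in the spirit of the proof of Lemma~\ref{lem:avoid1}, forces either a length-$3$ palindrome spanning one of the boundaries $u^Rv$, $vu$, or $uv$, or a square of period at least $2$ inside $vuv$, either way contradicting the properties of $\sigma$. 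The main obstacle is this last case: one must carefully exploit that the ternary alphabet leaves almost no freedom to simultaneously satisfy all of the non-palindromicity constraints that $\sigma$ imposes on the short factors of $w$.
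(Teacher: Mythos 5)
Your lower bound (every infinite binary word contains a square $XX$ with $|X|\geq 2$, and taking $|\varphi(\alpha)|=1$ turns it into an occurrence of $\alpha^R\beta\alpha\beta$) and your opening reductions (if $|\varphi(\alpha)|=1$ or $\varphi(\alpha)$ is a palindrome the occurrence is a forbidden square; if $|\varphi(\beta)|=1$ or $\varphi(\beta)$ is a palindrome the factor $u[0]\,v\,u[0]$ is a forbidden palindrome) coincide with the paper, and your aperiodicity remark is fine (in fact it is immediate: an ultimately periodic word contains either $aaa$ or a square of a word of length $\geq 2$). The problem is that the heart of the proof --- the case $|u|,|v|\geq 2$ with $u,v$ non-palindromic --- is exactly the part you leave as a sketched ``case analysis on the first letters of $u$ and $v$,'' and you yourself flag it as the main obstacle. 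As written this is a genuine gap, and moreover the dichotomy you announce is false: for $u=ab$, $v=cca$ the word $u^Rvuv=baccaabcca$ contains no palindrome of length $3$ and no square of a word of length $\geq 2$; the only obstruction is the length-$4$ palindrome $acca$. So a contradiction cannot be forced from length-$3$ palindromes and squares alone, and even after broadening to all palindromes of length $\geq 3$ you would still need an argument handling arbitrary lengths of $u$ and $v$, not just their first letters --- nothing in your sketch supplies it.

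The paper avoids all of this with one extra, easily checked property of $\sigma=\psi(s)$ that your proposal never uses: $\sigma$ contains none of the two-letter factors $10$, $02$, $21$. Since both $u$ and $u^R$ occur in any occurrence of $\alpha^R\beta\alpha\beta$, the word $u$ can contain no two distinct adjacent letters (otherwise one of the three forbidden factors appears in $u$ or in $u^R$), hence $u$ is unary; having no palindrome of length $3$ forces $|u|\leq 2$, while the square-freeness forces $|u|\geq 2$, so $u=xx$ is a palindrome and $u^Rvuv=(uv)^2$ is a forbidden square. If you want to salvage your route, you must either import this structural fact about $\sigma$ (which instantly collapses your hard case) or actually prove the general statement that every word $u^Rvuv$ with $|u|,|v|\geq 2$ contains a palindrome of length at least $3$ or a square of a word of length at least $2$; the latter is a nontrivial claim that your proposal does not establish.
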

\begin{proof}
We claim that $\sigma$ avoids $\alpha^R\beta\alpha\beta$. First observe that since $\sigma$ contains no squares of words of length 2 or longer, it must be that the image of $\alpha$ is greater than $1$. Furthermore, since it contains no palindromes of length $3$ or longer, it must also be that the image of $\beta$ is greater than $1$.

Moreover, we note that $\sigma$ does not contain $10$, $02$, or $21$ as factors. Thus it must be that none of these or their mirror images are factors of the image of $\alpha$. However, it follows in this case that the image of $\alpha$ must be unary, and therefore, have length at most $2$. But this is again a contradiction with the fact that $\sigma$ contains no squares of words of length $2$ or longer.
\qed\end{proof}

For the pattern $\alpha\beta\beta\alpha$, we know it has index 3. 
In every infinite aperiodic binary word we have either  $0110$, $1111$, $10^i1110^i$, or $0^i1110^i1$ as factors, for some $i>0$, or one of their conjugates. 
It immediately follows that  $\alpha\beta\beta^R\alpha$ is unavoidable on a binary alphabet, by an aperiodic infinite word. 
Furthermore, a binary word avoiding $\alpha\beta\beta\alpha^R$ or $\alpha\beta\beta^R\alpha^R$, would have to be of the form $w=\prod 0^{\{1,3\}}1^{\{1,3\}}$.  But since every such aperiodic word contains $101011$ as a factor, we have that $\alpha\beta\beta\alpha^R$ meets every aperiodic infinite binary word. For the word $(0111)^\omega$, observe that the pattern occurs in it as the factor $1101110111$, where $\alpha$ goes to $1$ and $\beta$ to $1011$. However, $\alpha\beta\beta\alpha^R$ does not meet $(01)^\omega$. This is straightforward, as the image of $\beta\beta$ would have even length, and thus would always be preceded and followed by different characters. As the image of $\alpha$ ends with the same letter as the image of $\alpha^R$ begins, the conclusion follows. 

\begin{lemma}\label{lem4}
The pattern $\alpha\beta\beta^R\alpha^R$ has avoidability index $2$ and there exists an infinite aperiodic binary word avoiding the pattern.
\end{lemma}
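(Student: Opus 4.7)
The plan is to first observe that the pattern $\alpha\beta\beta^R\alpha^R$ describes exactly the class of even-length palindromic factors of length at least $4$: every instance $uvv^Ru^R$ is an even palindrome of length $2|u|+2|v|\geq 4$, and conversely any even palindrome $p=p_1p_2\cdots p_{2n}$ with $n\geq 2$ factors as $uvv^Ru^R$ by choosing $u=p_1$ and $v=p_2\cdots p_n$. Hence avoiding $\alpha\beta\beta^R\alpha^R$ on binary words is equivalent to containing no even palindromic factor of length $\geq 4$.

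Next I would restrict to infinite binary words whose maximal blocks of equal consecutive letters all have length in $\{1,3\}$ and show that any such word $w$ avoids the pattern. Suppose for contradiction that $w$ has all such block lengths in $\{1,3\}$ and yet contains an even palindrome $p=w[s..s+2n-1]$ with $n\geq 2$. The palindrome identity $p_i=p_{2n+1-i}$ forces $p_n=p_{n+1}$, so positions $s+n-1$ and $s+n$ of $w$ belong to a common maximal block $R$ of equal letters. Since $|R|\geq 2$ and $|R|\in\{1,3\}$, we obtain $|R|=3$, and so $R$ must occupy either positions $s+n-1,s+n,s+n+1$ (case A) or $s+n-2,s+n-1,s+n$ (case B) of $w$. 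In case A the letter $w[s+n-2]$ lies in the preceding maximal block, so $w[s+n-2]\neq w[s+n-1]=w[s+n+1]$, which contradicts the palindrome identity $p_{n-1}=p_{n+2}$; case B is symmetric. The bounds $n\geq 2$ and $w$ infinite guarantee that positions $s+n-2$ and $s+n+1$ exist, so no boundary effects arise.

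Finally I would exhibit an aperiodic binary word with all maximal blocks of length in $\{1,3\}$. Let $(b_i)_{i\geq 1}$ be any non-eventually-periodic sequence over $\{1,3\}$ — for instance the image of the Thue--Morse word under the coding $0\mapsto 1$, $1\mapsto 3$ — and set $w=0^{b_1}1^{b_2}0^{b_3}1^{b_4}\cdots$. Any ultimate period of $w$ would force its maximal-block length sequence, which is precisely $(b_i)$, to be ultimately periodic; since $(b_i)$ is not, $w$ is aperiodic. By the previous paragraph $w$ avoids $\alpha\beta\beta^R\alpha^R$, which simultaneously gives the required aperiodic binary avoiding word and the upper bound $2$ on the avoidability index. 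The matching lower bound is immediate since $0^\omega$ meets the pattern via $\alpha,\beta\mapsto 0$. The main obstacle is the central case analysis; once one sees why the block lengths $\{1,3\}$ are exactly what is needed (length $2$ would produce the palindrome directly, length $\geq 4$ would contain $0000$ or $1111$), the rest of the argument is bookkeeping.
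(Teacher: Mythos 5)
Your proof is correct, and its mathematical core is the same as the paper's: you reduce the pattern $\alpha\beta\beta^R\alpha^R$ to even-length palindromes of length at least $4$ and kill them by a local obstruction at the palindrome centre, which is exactly what the paper does for its word $\tau'''$ via the forbidden factors $00$, $0110$, $1111$. Where you differ is in the packaging and the witness. You isolate a clean sufficient condition -- all maximal runs have length in $\{1,3\}$ -- and verify it by the two-case analysis around the centre; this is slightly more general than the paper's condition, since it permits runs $000$ (your criterion and the paper's forbidden-factor list are incomparable, but both work for the same reason: the letters at distance two from the centre pair cannot match). Your witness word, with run lengths given by a coding of Thue--Morse, is arguably more natural and makes the aperiodicity argument transparent (run-length encoding of an ultimately periodic word is ultimately periodic), whereas the paper's $\tau'''$, with $111$'s inserted at exponentially growing positions and an odd number of $0$'s between consecutive $111$'s, is engineered so that the same word can be reused in the later Lemmas~\ref{lem5} and~\ref{lem12}; your word would not serve all of those later arguments as stated. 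You also spell out two things the paper leaves implicit: the exact equivalence between instances of the pattern and even palindromes of length at least $4$ (including the converse factorisation $u=p_1$, $v=p_2\cdots p_n$), and the index lower bound via unary words. So: same key idea, a mildly more general criterion, and a different, self-contained aperiodic witness.
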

\begin{proof}
Let us apply a strategy similar to above and triple 1's at positions exponentially apart from the beginning in the word $\tau=(10)^{\omega}$. Furthermore, in order to make later on further use of the constructed word, we shall also impose the condition that between every two consecutive factors $111$ there is an odd number of $0$s. We have the word
$$\tau'''=01 \underline{11} 01 01 01 \underline{11} 01 01 01 01 01 01 01 \underline{11} 01 01 01 01 01 01 01 01 01 01 01 01 01 01 01 \underline{11}\cdots,$$

Observe that in fact, our pattern is an even length palindrome. However, since $\tau'''$ contains none of $00$, $0110$, or $1111$, as a factor, it follows immediately that no even length palindrome of length greater than 3, can exist in $\tau'''$.
%
%
\qed\end{proof}

For this case, we are left to consider the variations of the patterns $\alpha\alpha\beta\beta$ and $\alpha\alpha\beta\alpha\beta$. For the former, we know that when we reverse one variable, the pattern is $2$-avoidable according to item 3 of Theorem~\ref{thm:unary_pat}. Thus we only need to consider this pattern in the context of aperiodic infinite words.
Obviously any variation of the pattern meets every word that has $0011$ or $1111$ as factors.

\begin{lemma}\label{lem5}
The pattern $\alpha\alpha^R\beta^R\beta$ has avoidability index $2$ and there exists an infinite aperiodic binary word avoiding the pattern.
\end{lemma}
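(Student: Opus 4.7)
The plan is to re-use the word $\tau'''$ constructed in the proof of Lemma~\ref{lem4}, which is already known to be a binary aperiodic infinite word containing none of $00$, $0110$, or $1111$ as a factor. I will argue that $\tau'''$ in fact also avoids the pattern $\alpha\alpha^R\beta^R\beta$; combined with the trivial fact that no non-trivial pattern is avoidable over a unary alphabet, this gives avoidability index exactly $2$.

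The main observation is structural: $\alpha\alpha^R$ and $\beta^R\beta$ are both even-length palindromes of length at least $2$, so every occurrence of the pattern in a binary word factors as a concatenation $P\cdot Q$ of two such palindromes. I would therefore determine exactly which even palindromes of length $\geq 2$ can appear as factors of $\tau'''$. At length $2$ the only survivor is $11$, since $00$ is forbidden. At length $4$ the candidates are $0000$, $0110$, $1001$, and $1111$; each either contains the forbidden factor $00$ or is itself forbidden, so none occurs in $\tau'''$. For a general even palindrome $ww^R$ of length $2k\geq 4$, the four central letters take the form $w[k-2]\,w[k-1]\,w[k-1]\,w[k-2]$, which is itself an even palindrome of length $4$; hence $\tau'''$ contains no even palindrome of length $\geq 4$ either.

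Consequently the only even palindrome of length $\geq 2$ occurring in $\tau'''$ is $11$, and so a hypothetical occurrence of $\alpha\alpha^R\beta^R\beta$ in $\tau'''$ would be forced to coincide with $11\cdot 11 = 1111$, contradicting the fact that $1111$ is not a factor of $\tau'''$. The only step that needs any justification is the reduction to a length-$4$ central palindrome, but this follows directly from the palindromic symmetry around the midpoint (the two middle letters of $ww^R$ agree, and the two letters flanking them agree), so it is a one-line check rather than a genuine obstacle. I expect the bulk of the write-up to be simply listing the short palindromes and invoking the forbidden factors of $\tau'''$.
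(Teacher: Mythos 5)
Your proposal is correct and follows essentially the same route as the paper: both arguments reuse the word $\tau'''$ from Lemma~\ref{lem4}, observe that the absence of the factors $00$, $0110$ and $1111$ leaves $11$ as the only even-length palindromic factor, and note that an occurrence of $\alpha\alpha^R\beta^R\beta$ would then force the factor $1111$, a contradiction. Your explicit reduction to the central length-$4$ palindrome merely spells out the step the paper treats as immediate, so there is no substantive difference.
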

\begin{proof}
Let us again consider the word $\tau'''$.

Obviously, the only unary square that occurs in the word is $11$. Thus, the last letter of the images of $\alpha$ and $\beta^R$ have to be $1$. If the image of any of these has length $1$, then the image of the other, has to have length greater than $1$. However, $\tau''$ contains no even length palindromes of length greater than $3$.
\qed\end{proof}

\begin{lemma}
The only infinite binary word avoiding $\alpha\alpha^R\beta\beta$ has $(01)^\omega$ as a suffix.
\end{lemma}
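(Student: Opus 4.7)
The plan is to prove the stronger statement that any infinite binary word $w$ avoiding $\alpha\alpha^R\beta\beta$ contains neither $00$ nor $11$ as a factor; since $w$ then alternates, $w\in\{(01)^\omega,(10)^\omega\}$, and both have $(01)^\omega$ as a suffix. The proof splits naturally into the $00$ case, with the $11$ case deferred to the obvious $0\leftrightarrow 1$ symmetry, which preserves the pattern.

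The main observation is that specializing to $|\alpha|=1$ turns the pattern into $aa\beta\beta$ for a letter $a$, so whenever $w[i..i+1]=aa$ the suffix $w[i+2..]$ cannot begin with any square (otherwise this square together with the $aa$ is an instance of $\alpha\alpha^R\beta\beta$). The sub-instance $|\beta|=1$ already forbids the length-four factors $0000$ and $0011$, so after every occurrence of $00$ in $w$ the pair $w[i+2]w[i+3]$ must belong to $\{01,10\}$.

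The core of the argument is a forced-propagation contradiction. Assume $w[i..i+1]=00$ and branch on $w[i+2]w[i+3]$. In the branch $01$, the letter $w[i+2]=0$ creates a further $00$ at position $i+1$, and combining the no-square-prefix constraints at $w[i+2..]$ and $w[i+3..]$ pins down $w[i..i+9]=0001000101$ one letter at a time; the factor $w[i+4..i+9]=00\cdot 01\cdot 01$ is then an explicit instance of $\alpha\alpha^R\beta\beta$ with $\alpha=0$ and $\beta=01$, contradicting avoidance. In the branch $10$ one further sub-branches on $w[i+4]w[i+5]\in\{00,01,11\}$ (the value $10$ being excluded directly by the no-square-prefix condition on $w[i+2..]$), and in each case the same letter-by-letter propagation forces a short even palindrome — namely $1001$, $010010$, or another $aa$ — to appear with the start of a short square $\beta\beta$ immediately after it, again producing the pattern.

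Hence $w$ contains no $00$ and, by the symmetric argument, no $11$, so $w$ is alternating and has $(01)^\omega$ as a suffix. The principal obstacle is the bookkeeping in the case analysis, as each newly forced $aa$ contributes a fresh no-square-prefix constraint that must be combined with the earlier ones; in every branch, however, at most about a dozen forced letters suffice either to exhibit the pattern directly or to produce two incompatible no-square-prefix constraints at the same position.
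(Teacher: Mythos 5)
Your argument is correct and is essentially the paper's own approach: both proofs reduce the statement to a bounded forced-extension check showing that any occurrence of $00$ (and, by the $0\leftrightarrow 1$ symmetry, of $11$) must within roughly a dozen letters produce an even palindrome immediately followed by a square, the paper packaging this as the claim that every length-$11$ word starting with $100$ meets the pattern. Your version is marginally tidier in that it never uses a letter preceding the $00$, so it also covers an occurrence at the very beginning and gives directly the slightly stronger conclusion that an avoiding word is alternating everywhere; the only small omission is the one-line verification, which the paper does include, that $(01)^\omega$ itself avoids $\alpha\alpha^R\beta\beta$, immediate because every instance of that pattern contains a unary square.
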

\begin{proof}
Let us consider towards a contradiction that there exists an infinite binary word that avoids the pattern. Obviously the pattern contains no unary $4$-power.

First assume that this word contains $00$ as a factor (the case when it has $11$ is symmetrical). We consider the first occurrence of $00$ in this word, starting after position $1$; this position is preceded by $1$. It is easy to check that every word starting with $100$ and having length 11 contains an occurrence of the pattern.

Hence, our word has to be ultimately periodic, with $01$ as period. To see that this word avoids our pattern it is straightforwards, as it contains no unary square that would be created by the image of $\alpha$ and its reverse.
\qed\end{proof}


The only patterns from {\bf item~\ref{item2}} of Theorem~\ref{fullbinarypatterns} left are variations of $\alpha\alpha\beta\alpha\beta$.

\begin{lemma}\label{lem:aabab}
The patterns $\alpha\alpha\beta\alpha\beta^R$, $\alpha\alpha\beta\alpha^R\beta$, and $\alpha\alpha\beta\alpha^R\beta^R$ are $2$-avoidable.
\end{lemma}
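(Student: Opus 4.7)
The plan is to exhibit the simplest binary infinite word, namely $w=(01)^\omega$, and show that it already avoids each of the three patterns $P_1=\alpha\alpha\beta\alpha\beta^R$, $P_2=\alpha\alpha\beta\alpha^R\beta$, and $P_3=\alpha\alpha\beta\alpha^R\beta^R$. Since the lemma only asserts $2$-avoidability (not aperiodic avoidance), a periodic construction is permissible, and $w$ is well suited because its only squares are $(01)^{2m}$ and $(10)^{2m}$ and its every odd-length factor is a palindrome.

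First I would pin down the admissible images of $\alpha$. Because $\alpha\alpha$ must be a factor of $w$, the case $|\alpha|=1$ is impossible (it would demand $00$ or $11$). If $|\alpha|$ is odd and at least $3$, then $\alpha$ is an odd-length factor of $w$, hence a palindrome of the form $0(10)^k$ or $1(01)^k$, and $\alpha\alpha$ would contain $00$ or $11$. Thus $|\alpha|$ is even and $\alpha\in\{(01)^m,(10)^m\}$ for some $m\ge 1$. Invoking the $0\leftrightarrow 1$ complement symmetry that stabilises the orbit of $w$, I may assume $\alpha=(01)^m$ and that $\alpha\alpha$ begins at an even position.

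The core of the argument is then a parity check on the starting positions of the remaining blocks. In $w$, an even position carries $0$ and an odd position carries $1$, so the first letter of each block $\alpha$, $\alpha^R$, $\beta$, or $\beta^R$ is determined by the parity of where it starts. I would split on the parity of $|\beta|$: if $|\beta|$ is odd then $\beta$ is an odd-length factor of $w$, hence a palindrome, so $\beta^R=\beta$, which makes $P_1$ collapse to the classical $\alpha\alpha\beta\alpha\beta$ and $P_2,P_3$ coincide; if $|\beta|$ is even then $\beta,\beta^R\in\{(01)^k,(10)^k\}$, exactly one of which can legally start at an even position. In each configuration one traces the starting index of the next block and reads off the required first letter; in every subcase the parity forbids the letter demanded (for instance, after an even-length $\beta$ ending at an odd position, $\alpha^R=(10)^m$ would have to begin with a $1$ at an even position, which is impossible in $w$).

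No genuine obstacle is expected; once $\alpha$ has been forced into the form $(01)^m$ the verification is mechanical parity bookkeeping. The only care needed is in enumerating the subcases uniformly across the three patterns and in invoking the complement symmetry properly so that the case ``$\alpha\alpha$ begins at an odd position'' need not be treated separately.
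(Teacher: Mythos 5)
Your proposal is correct and essentially the same as the paper's proof: both take the witness word $(01)^\omega$ and derive a contradiction by tracking which letter each block of the pattern must begin and end with. Your parity-of-position bookkeeping (after pinning $\alpha$ down to $(01)^m$) is just a reformulation of the paper's direct first/last-letter argument, so there is no substantive difference in method.
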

\begin{proof}
Consider again the word $\tau$. If the image of $\alpha$ starts with $0$, it must end with $1$. Thus the image of $\beta$ must start with $0$. For the first pattern,  the image of $\beta$ ends in $1$, which leads to a contradiction as the image of $\beta^R$ would start with a $0$. For the other patterns the image of $\beta$ must end in $0$, which again leads to a contradiction, as we get that either the image of $\beta$ or that of $\beta^R$ would start with $1$ now. The same strategy works also when the image of $\alpha$ starts with $1$.
\qed\end{proof}

\begin{lemma}
There do not exist infinite aperiodic binary words avoiding any variations of the pattern $\alpha\alpha\beta\alpha\beta$ that include reverses.
\end{lemma}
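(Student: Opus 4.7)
The plan is to argue by contradiction: assume $w$ is an infinite aperiodic binary word avoiding some variation $p$ of $\alpha\alpha\beta\alpha\beta$ with at least one reverse, and derive that $w$ must in fact be eventually periodic. Since the unreversed pattern $\alpha\alpha\beta\alpha\beta$ has avoidability index $3$ by Theorem~\ref{fullbinarypatterns}, $w$ already contains some factor $uuvuv$; the strategy is to find an occurrence in which both $u$ and $v$ are palindromes, so that every reversal variation of $\alpha\alpha\beta\alpha\beta$ is simultaneously met by that very factor (each $\alpha^R$-position maps to $u^R=u$ and each $\beta^R$-position to $v^R=v$), yielding the desired contradiction.

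First, by aperiodicity, $w$ is not $(01)^\omega$ or $(10)^\omega$, so up to complementation $w$ contains $11$. I would then split on the unary block structure of $w$. If $w$ contains a unary fifth power $a^5$, set $\alpha=\beta=a$; single letters are palindromes, and $\alpha\alpha\beta\alpha\beta=a^5$ is a witness for every variation. If $w$ avoids fifth powers but contains either $00101$ or $11010$, take $\alpha$ and $\beta$ to be the two distinct letters appearing; again every variation reduces to the same factor. Thus avoidance of $p$ forces $w$ to avoid $\{0^5,1^5,00101,11010\}$ altogether, which bounds every maximal $0$- or $1$-block of $w$ by length $4$ and rules out several adjacent block-length combinations.

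In this restricted regime, I would locate multi-letter witnesses for each remaining variation by exploiting transitions in the block-length sequence of $w$: for example, the block sequence $(2,2,3,\geq 2)$ produces the factor $110011100$, which is an instance of $\alpha\alpha\beta\alpha\beta^R$ with $\alpha=1$ and $\beta=001$, and symmetric or complementary block configurations produce analogous witnesses for the other variations with reverses. The main obstacle is this last case: one has to verify, for each variation separately, that no aperiodic block-length sequence over $\{1,2,3,4\}$ can simultaneously avoid the single-letter witnesses above and the relevant multi-letter witness; doing so confines the admissible transitions to a finite graph whose only infinite paths are eventually periodic, contradicting aperiodicity and matching the periodic constructions already identified in Remark~\ref{rem1} and Lemma~\ref{lem:aabab}.
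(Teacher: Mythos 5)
Your setup matches the paper's in spirit: both arguments first note that an avoiding aperiodic word can contain no unary $5$-power (take $\alpha=\beta$ a single letter) and none of the single-letter instances $00101$, $11010$, and then both pass to an analysis of the sequence of maximal blocks, whose lengths are bounded by $4$. Up to that point your reductions are correct, and your sample witness $110011100$ for $\alpha\alpha\beta\alpha\beta^R$ (with $\alpha=1$, $\beta=001$) checks out.

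However, the decisive step is missing, and the way you propose to supply it is not justified. You assert that, for each variation, avoiding the single-letter witnesses together with finitely many ``multi-letter witnesses'' confines the block-length sequence to a finite transition graph whose infinite paths are all eventually periodic. But avoiding, say, $\alpha\alpha\beta\alpha\beta^R$ means that no occurrence of $aa$ is followed by a factor of the form $xax^R$ for \emph{any} nonempty $x$; this is an infinite family of constraints in which the image of $\beta$ may span several blocks and its required length depends on the sizes of the neighbouring blocks (for instance, killing $(000111)^\omega$ needs $\beta=01110$, and killing $\{2,3\}$-block words needs witnesses such as $001100011$ with $\beta=110$). Nothing in your sketch shows that a \emph{bounded} set of such witnesses suffices, so the claimed reduction to a finite graph (and hence the appeal to eventual periodicity of its paths) is precisely the content of the lemma, not a verification one can defer. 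The paper's own (admittedly sketchy) argument avoids this issue differently: starting from a maximal block $0^i$ it writes a factor $0^i1^j0^k1^{\ell}0^m$ and derives inequalities such as $j>\ell$ and $m<\frac{i-k}{2}$, which it then propagates along the word until they contradict the bound $4$ on block lengths; this descent mechanism is what handles the unboundedly long potential images of $\beta$, and it has no counterpart in your plan. Your opening idea of finding an occurrence $uuvuv$ with both $u$ and $v$ palindromes also provides no leverage beyond the single-letter cases, since $2$-unavoidability of $\alpha\alpha\beta\alpha\beta$ gives no control over the shape of $u$ and $v$.
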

\begin{proof}
The idea of the proof follows that of Lemma~\ref{lem:avoid1}. We shall only give a sketch as, although the idea is simple, the proof is cumbersome.

We assume that such an infinite aperiodic binary word exists, for one of the patterns. A first observation is that such an aperiodic word must contain $00$ or $11$ as factors. Furthermore, the word cannot contain a unary power of length $5$, as in this case, we take each variable to be represented by a letter. 

Next, we consider the largest $i$ such that $0^i$ is a factor of the word (the same works for $1^i$). Then, we have a factor $0^i1^j0^k1^{\ell}0^m$, such that $i\geq k$ and all powers are less than $5$. If $k=1$ or $k=3$, then it must be that $j>\ell$, as otherwise, for the patterns that have one of the $\beta$s reversed, we can take the image of $\alpha$ to be $0^{\lfloor\frac{i-j}{2}\rfloor+1}$ and that of $\beta$ to be $1^j$ or $1^j0$, respectively. For the other patterns, we just have to also consider the relation between $k$ and $m$, namely that $m<\frac{i-k}{2}$. However, if this happens, we repeat the strategy by taking the $\alpha$ form the group $1^j$ and considering now the next block of $1$'s, following $0^m$. In the end we reach a contradiction with the fact that no unary power greater than $4$ exists.
\qed\end{proof}


Finally, the only patterns left to be considered are the ones of item~\ref{item3}. 
First, observe that all patterns of length six or more that have only one occurrence of one of the variables, either normal or as a reversal, contain a unary factor of length 3 or longer. Following item 4 of Theorem~\ref{thm:unary_pat}, we conclude that for each of these patterns there exists an aperiodic infinite binary word avoiding them.

Therefore, the only patterns that we still have to consider are the variations of $\alpha\alpha\beta\beta\alpha$, $\alpha\beta\alpha\beta\alpha$, $\alpha\alpha\beta\alpha\beta\beta$, $\alpha\beta\alpha\beta\beta\alpha$, $\alpha\beta\alpha\beta\alpha\beta$, and $\alpha\alpha\beta\beta\alpha\alpha$.

\begin{lemma}\label{lem:aabba}
All variations of the pattern $\alpha\alpha\beta\beta\alpha$ that include reverses have avoidability index 2.
\end{lemma}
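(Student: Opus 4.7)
The plan is to reduce the verification of all variations of $\alpha\alpha\beta\beta\alpha$ with reverses to a single representative case, and then handle that case with the period-two word $(01)^\omega$.

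First, I would invoke Remark~\ref{rem1} to dispose of the variations in which the two adjacent $\alpha$'s (positions~1 and~2) carry different reversal statuses: such patterns contain $\alpha\alpha^R$ or $\alpha^R\alpha$ as a sub-factor, and are therefore $2$-avoided by $(01)^\omega$. Next, I would extend this observation to the $\beta$-variable: the word $(01)^\omega$ contains no non-empty even-length palindrome, hence avoids every occurrence of $\beta\beta^R$ and $\beta^R\beta$. Consequently, every variation in which the two adjacent $\beta$'s (positions~3 and~4) have different reversal statuses is again $2$-avoided by $(01)^\omega$.

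After these reductions, only variations in which positions~1--2 share a common reversal status, and likewise positions~3--4, remain. Since the relabellings $\alpha\leftrightarrow\alpha^R$ and $\beta\leftrightarrow\beta^R$ preserve the avoidability index, I may assume that positions~1,2,3,4 are all unreversed. The only remaining freedom lies in the reversal status of the isolated $\alpha$ in position~5: the unreversed choice yields the original pattern $\alpha\alpha\beta\beta\alpha$ (index~3, not at issue here), while the reversed choice gives precisely $\alpha\alpha\beta\beta\alpha^R$, the single pattern left to analyse.

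For $\alpha\alpha\beta\beta\alpha^R$, I would show that $(01)^\omega$ itself avoids it. Suppose for contradiction that non-empty words $u,v$ exist such that $uu\,vv\,u^R$ is a factor of $(01)^\omega$. Because $(01)^\omega$ has period~$2$, the occurrences of the squares $uu$ and $vv$ force $|u|$ and $|v|$ to be even; therefore $u \in \{(01)^k, (10)^k\}$ for some $k\ge 1$, and in particular $u \neq u^R$. However, the factor of length $|u|$ starting at the position $2|u|+2|v|$ after the start of the occurrence lies at the same parity class as the start, so it must again equal $u$, not $u^R$---a contradiction. The main obstacle is presenting the equivalence-class reduction cleanly enough that this short parity argument is the only real calculation needed; everything else is symmetry bookkeeping backed by Remark~\ref{rem1} and the absence of non-trivial even palindromes in $(01)^\omega$.
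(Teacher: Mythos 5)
Your proposal is correct and takes essentially the same route as the paper: both work with the word $(01)^\omega$ (the paper's $\tau$), eliminate every variation containing $\alpha\alpha^R$, $\alpha^R\alpha$, $\beta\beta^R$ or $\beta^R\beta$ by noting that such an occurrence would force a unary square (equivalently, a non-trivial even-length palindrome), and reduce the remaining cases to $\alpha\alpha\beta\beta\alpha^R$. Your closing parity argument is merely a restatement of the paper's observation that the image of $\alpha^R$ would have to begin with the same letter as the image of $\alpha$, which the period-two structure forbids.
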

\begin{proof}
We claim that the word $\tau$ avoids all variations of the pattern. Since the only squares in $\tau$ are powers of $01$ (for $10$ is similar), having one of the first two $\alpha$'s or one of the $\beta$'s reversed, gives us our result, as we would need a unary square. Thus it must be that the last $\alpha$ is reversed. But, in this case, as the image of $\beta$ would start with $0$ and end in $1$, it must be that the image of the $\alpha^R$ starts also with $0$ and ends with $1$, which is impossible.
\qed\end{proof}

Following the results of Lemma~\ref{lem:aabab}, and Lemma~\ref{lem:aabba}, respectively, due to the division of the patterns, the next observations are straightforward.
\begin{lemma}
All variations of the patterns $\alpha\alpha\beta\alpha\beta\beta$, $\alpha\beta\alpha\beta\beta\alpha$, and $\alpha\alpha\beta\beta\alpha\alpha$ that include reverses have avoidability index 2.
\end{lemma}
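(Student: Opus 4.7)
The plan is to deduce each of the three cases from a length-five sub-pattern whose reverse-bearing variations are already known to be 2-avoidable, and to dispatch the one residual variation --- the one in which the reverse sits only on the final variable occurrence --- by producing a short 2-avoidable factor via Remark~\ref{rem:unary_avoid} or the earlier discussion.

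For $\alpha\alpha\beta\alpha\beta\beta$ the prefix $\alpha\alpha\beta\alpha\beta$ divides the pattern. By Lemma~\ref{lem:aabab} together with Remark~\ref{rem1}, every variation of $\alpha\alpha\beta\alpha\beta$ carrying at least one reversal is 2-avoidable, so the same holds for any variation of $\alpha\alpha\beta\alpha\beta\beta$ whose reverse appears among the first five positions. The one remaining variation is $\alpha\alpha\beta\alpha\beta\beta^R$, in which $\beta\beta^R$ occurs as a factor, and Remark~\ref{rem:unary_avoid} then yields 2-avoidability. The case $\alpha\alpha\beta\beta\alpha\alpha$ is identical, this time with prefix $\alpha\alpha\beta\beta\alpha$ and Lemma~\ref{lem:aabba}, the residual variation $\alpha\alpha\beta\beta\alpha\alpha^R$ being killed by the $\alpha\alpha^R$ suffix factor.

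For $\alpha\beta\alpha\beta\beta\alpha$ the length-five prefix $\alpha\beta\alpha\beta\beta$ is the reverse-complement of $\alpha\alpha\beta\alpha\beta$, so by the symmetry of the avoidability index each reverse-bearing variation of it inherits 2-avoidability from the corresponding variation of $\alpha\alpha\beta\alpha\beta$ furnished by Lemma~\ref{lem:aabab} and Remark~\ref{rem1}. This covers every variation of $\alpha\beta\alpha\beta\beta\alpha$ whose reverse lies among the first five positions. The remaining variation is $\alpha\beta\alpha\beta\beta\alpha^R$, whose sub-factor $\alpha\beta\beta\alpha^R$ at positions $3$--$6$ was already observed in the paragraph preceding Lemma~\ref{lem4} to be avoided by $(01)^\omega$, so 2-avoidability follows.

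The one point requiring care is to check that Lemma~\ref{lem:aabab} together with Remark~\ref{rem1} really do exhaust all reverse-bearing variations of $\alpha\alpha\beta\alpha\beta$: combinations with only the middle $\beta$ reversed are not listed there explicitly, but they are straightforward to handle with the same $\tau=(01)^\omega$ argument, since the $\alpha\alpha$ prefix of the image forces $\varphi(\alpha)$ to be a power of $01$ or $10$, and then a reversed middle $\beta$ is incompatible with the alternating context on either side. Once this completeness is in hand, the three statements reduce to pure divisibility bookkeeping, as the proof-sketch hint indicates.
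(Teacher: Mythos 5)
Your overall route coincides with the paper's: the paper disposes of this lemma in a single sentence, by divisibility from Lemma~\ref{lem:aabab} and Lemma~\ref{lem:aabba}, and your prefix-divisibility plus residual-case treatment is essentially that argument written out. However, there is a genuine gap precisely at the step you flag as the delicate one. The claim that Lemma~\ref{lem:aabab}, Remark~\ref{rem1} and your ``middle $\beta$'' addendum yield $2$-avoidability of \emph{every} reverse-bearing variation of $\alpha\alpha\beta\alpha\beta$ is false: the variation $\alpha\alpha\beta^R\alpha\beta^R$ (both $\beta$'s reversed) has exactly the same pattern language as $\alpha\alpha\beta\alpha\beta$ (replace the image $v$ of $\beta$ by $v^R$), hence avoidability index $3$; the same holds for $\alpha^R\alpha^R\beta\alpha^R\beta$. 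In particular $(01)^\omega$ meets these, since $(01)^5$ is an instance, so the $\tau$-argument you sketch cannot cover them. Consequently, variations of $\alpha\alpha\beta\alpha\beta\beta$ such as $\alpha\alpha\beta^R\alpha\beta^R\beta$ or $\alpha^R\alpha^R\beta\alpha^R\beta\beta$, whose reversal marks all sit in the first five positions, are not actually handled by your prefix argument, and they are not your single residual case either; the defect propagates through the reverse-complement symmetry to $\alpha\beta\alpha\beta\beta\alpha$ as well (e.g.\ $\alpha\beta^R\alpha\beta^R\beta^R\alpha$, whose length-five prefix is equivalent to the index-$3$ pattern $\alpha\beta\alpha\beta\beta$).

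The lemma itself survives, but closing the gap needs an ingredient you never invoke: whenever every occurrence of some variable carries the reversal, substitute that variable's image by its reversal, which removes those marks. After this normalisation each problematic variation is either equivalent to one you do cover (for instance $\alpha\alpha\beta^R\alpha\beta^R\beta\equiv\alpha\alpha\beta\alpha\beta\beta^R$, or more directly it ends with $\beta^R\beta$, so Remark~\ref{rem:unary_avoid} applies), or it is equivalent to the underlying reverse-free length-six pattern, whose avoidability index is $2$ by item~\ref{item3} of Theorem~\ref{fullbinarypatterns}; the latter case (e.g.\ $\alpha^R\alpha^R\beta\alpha^R\beta\beta$) is covered by none of the words or factors in your proposal. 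With that normalisation and the appeal to Theorem~\ref{fullbinarypatterns} added, your case analysis becomes exhaustive. Your citation of Lemma~\ref{lem:aabba} for $\alpha\alpha\beta\beta\alpha\alpha$ is formally in order, since that lemma is stated for all variations, though you should be aware that the paper's proof of it has the same lacuna for doubly-reversed variables.
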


Consider now the following 6-uniform morphism $\psi$ and let us recall a result from~\cite{RaShWa05} that comes together with this morphism. 
$$\psi(0) = 011100,\  \psi(1) = 101100,\  \psi(2) = 111000,\ \psi(3) = 110010,\  \psi(4) = 110001.$$

\begin{theorem}\label{thm:RSW} 
There is a square-free  word $w$ that avoids $02$, $03$, $04$, $13$, $14$, $20$, $24$, $30$, $31$, $41$, $42$, $434010$, and, thus, the only squares in $\psi(w)$ are $00$, $11$, $0101$.
\end{theorem}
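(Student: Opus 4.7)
The plan is to split the statement into two parts: first, produce an infinite word $w$ over $\{0,1,2,3,4\}$ with the stated avoidance properties, and second, show that applying the $6$-uniform morphism $\psi$ creates only the squares $00$, $11$, and $0101$.

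For the first part, I would exhibit a uniform morphism $h\colon\{0,1,2,3,4\}^*\to\{0,1,2,3,4\}^*$ of modest length and take $w=h^{\omega}(0)$. The verification is a finite combinatorial check: (i) each $h(a)$ is internally square-free, and no square of period at most $\max_a|h(a)|$ appears in any $h(uv)$ with $uv$ a factor of $w$, so that by Crochemore's criterion $w$ itself is square-free; (ii) none of the eleven forbidden two-letter factors appears inside an image $h(a)$ or across a boundary $h(a)h(b)$ corresponding to an actual two-letter factor $ab$ of $w$; (iii) the word $434010$ does not appear inside $h(abc)$ for any factor $abc$ of $w$. Each of these is a bounded enumeration, and iterating $h$ starting from $0$ stabilises quickly.

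For the second part, let $xx$ be a square in $\psi(w)$, let $r\in\{0,1,\dots,5\}$ be the offset of its starting position relative to the nearest preceding $\psi$-block boundary, and let $s=|x|\bmod 6$. I would proceed by case analysis on the pair $(r,s)$. In each case, writing $xx$ inside a concatenation $\psi(a_1)\cdots\psi(a_k)$ and matching the two halves against the shifted block structure forces substantial constraints on the preimage letters $a_i$. Each alignment then produces either a square $aa$ in $w$, a forbidden two-letter factor in $w$, or the forbidden factor $434010$ in $w$ — all impossible by the first part. The only cases that survive are those with very short period, and direct inspection of $\psi(0),\psi(1),\psi(2)$ and their two-letter concatenations shows that the only surviving squares are $00$, $11$, and $0101$.

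The main obstacle is the case analysis of long squares whose shift $s$ is not a multiple of $6$. The five images $\psi(0),\dots,\psi(4)$ share many short common substrings (for example, several contain $00$, several contain $110$, and three share the prefix $11$), so a putative long square admits many candidate letter-sequences in its preimage. The forbidden pairs $02,03,04,13,14,20,24,30,31,41,42$ and the word $434010$ are engineered precisely to rule out these residual cases; the analysis is mechanical but cumbersome. Once the short squares are handled by direct inspection and the long squares are excluded via the alignment argument together with the avoidance properties of $w$, the theorem follows.
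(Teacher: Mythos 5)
The paper itself offers no proof of this statement: it is recalled verbatim from Rampersad--Shallit--Wang~\cite{RaShWa05} (the morphism $\psi$ is their five-letter-to-binary morphism in disguise), so there is no internal argument to match yours against; the comparison has to be with the cited proof. Measured as a stand-alone proof, your proposal has a genuine gap in both halves. For the existence half, the entire content of the claim is that a square-free word over $\{0,1,2,3,4\}$ avoiding the eleven length-two factors and $434010$ actually exists; you only promise ``a uniform morphism $h$ of modest length'' and list the finite checks (i)--(iii) it would have to pass, but you never exhibit such an $h$, and producing one that passes those checks is exactly the nontrivial step. (In~\cite{RaShWa05} this is done with an explicit $24$-uniform morphism on the five-letter alphabet together with a preservation lemma; note also that your check (ii) silently requires knowing which two-letter factors occur in $h^{\omega}(0)$, which needs its own closure argument.)

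For the second half, the alignment/case analysis on the offset pair $(r,s)$ is the right idea and is essentially the synchronization argument used in~\cite{RaShWa05}; but the assertion that every alignment of a long square forces a square, a forbidden pair, or $434010$ in the preimage $w$ is precisely the technical core of the theorem, and it hinges on specific properties of the blocks $\psi(0),\dots,\psi(4)$ (which windows determine a block, which prefixes and suffixes coincide). You describe this step as ``mechanical but cumbersome'' and leave it entirely unexecuted, as you do the finite verification for short squares. So the proposal is a correct road map that mirrors the known proof, but as written it establishes neither the existence of $w$ nor the classification of the squares of $\psi(w)$; to close the gap you must either supply the explicit morphism and carry out the verifications, or simply cite~\cite{RaShWa05} as the paper does.
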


The following result is a consequence of this construction:

\begin{lemma}\label{lem11}
The pattern $\alpha^R\beta\alpha\beta\alpha$ has avoidability index 2 and there exist infinite aperiodic binary words avoiding it.
\end{lemma}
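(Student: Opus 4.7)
The plan is to exhibit the infinite binary word $\psi(w)$ from Theorem~\ref{thm:RSW} as the required witness, showing both that it avoids $\alpha^R\beta\alpha\beta\alpha$ and that it is aperiodic. The driving observation is that the suffix $\beta\alpha\beta\alpha$ of the pattern equals $(\beta\alpha)^2$, so any instance of the pattern in $\psi(w)$ forces a square factor to appear, and the square structure of $\psi(w)$ is very tightly constrained.

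The first step is to assume, for contradiction, that $\psi(w)$ meets the pattern. Since both variables must be assigned non-empty words, $|\beta\alpha|\geq 2$, and therefore $(\beta\alpha)^2$ has length at least $4$. Theorem~\ref{thm:RSW} allows only the squares $00$, $11$, and $0101$ in $\psi(w)$, so the sole possibility is $(\beta\alpha)^2=0101$, which forces $|\alpha|=|\beta|=1$, $\beta=0$, and $\alpha=1$. In particular $\alpha^R=\alpha=1$, so the hypothetical occurrence of the whole pattern must read $10101$.

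The second step is to rule out $10101$ as a factor of $\psi(w)$. This is immediate: $10101$ contains $1010=(10)^2$ as a factor, and $1010$ is a length-$4$ square distinct from $00$, $11$, and $0101$, so by Theorem~\ref{thm:RSW} it cannot occur in $\psi(w)$. This contradiction yields avoidance, and the avoidability index is at most $2$; it cannot be $1$, since on a unary alphabet every binary pattern of positive length is trivially met.

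For aperiodicity, the plan is to argue directly that any ultimately periodic infinite word contains arbitrarily long squares (namely high powers of its period), whereas $\psi(w)$ contains no square of length greater than $4$ by Theorem~\ref{thm:RSW}. I do not foresee any serious obstacle: the only delicate point is that $\alpha^R=\alpha$ whenever $|\alpha|=1$, which is what makes the collapse to the single factor $10101$ automatic and prevents the need for a case analysis on which of the variables is reversed.
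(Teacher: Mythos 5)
Your proposal is correct and follows essentially the same route as the paper: both use the word $\psi(w)$ of Theorem~\ref{thm:RSW}, note that $\beta\alpha\beta\alpha$ yields a square which can only be $0101$, force $\alpha\to 1$, $\beta\to 0$, and kill the resulting factor $10101$ via the forbidden square $1010$. The only difference is that you explicitly spell out the aperiodicity (no squares longer than $4$ versus arbitrarily long squares in ultimately periodic words) and the trivial unary lower bound, which the paper leaves implicit.
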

\begin{proof}
Let us denote the word obtained from Theorem~\ref{thm:RSW} by $\gamma$. It is straightforward that, since $\gamma$ only contains $0101$ as a square of a word of length greater than $1$, this can be the only thing replacing $\beta\alpha\beta\alpha$. However, this implies that the image of $\alpha$ has length $1$, and no instance of the pattern is in the word, as $1010$ is not a valid factor. Furthermore, we note that any occurrence of a variation of the pattern $\alpha\beta\alpha\beta\alpha$ in $\gamma$ would enforce at least one variable having a non-unary image of length greater than $1$.
%
\qed\end{proof}


Since $\alpha\beta\alpha\alpha\beta$ is avoidable, by Theorem~\ref{fullbinarypatterns} and the pattern divisibility property, $\alpha^R \beta \alpha \beta \beta \alpha$ is also avoidable by the same binary infinite aperiodic word.

\begin{lemma}\label{lem12}
There exist infinite aperiodic binary words avoiding the patterns 
$\alpha^R\alpha^R\beta\beta\alpha\alpha$,
$\alpha^R\alpha\beta\beta\alpha^R\alpha$, 
$\alpha^R\alpha\beta\beta\alpha\alpha^R$, 
$\alpha^R\alpha\beta\beta\alpha\alpha$, 
$\alpha\alpha^R\beta\beta\alpha\alpha$, 
$\alpha \alpha^R \beta \alpha \beta \beta$, 
$\alpha^R \alpha \beta \alpha \beta \beta$,
$\alpha^R \alpha \beta \alpha \beta^R \beta$, and
$\alpha \alpha^R \beta \alpha \beta^R \beta$.
\end{lemma}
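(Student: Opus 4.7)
The plan is to prove the lemma by exhibiting the word $\tau'''$ from the proof of Lemma~\ref{lem4} as a common witness for all nine patterns. I will make use of its established properties: it is aperiodic, binary, and contains none of the factors $00$, $0110$, $1111$; hence no even palindrome of length $\geq 4$, and $11$, $(01)^{2m}$, $(10)^{2m}$ are its only squares. The further structural input needed is that every occurrence of $11$ in $\tau'''$ lies inside a triple $111$, and that consecutive triples are separated by an odd number of $0$'s.

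For the six patterns carrying a factor $\alpha\alpha^R$, $\alpha^R\alpha$, or $\beta^R\beta$ (patterns $2$--$5$, $8$, $9$), I will argue that the image of that factor is an even palindrome in $\tau'''$, hence of length $2$ and therefore equal to $11$; the corresponding variable is forced to be the single letter $1$. For patterns $8$ and $9$ this pins both $\alpha$ and $\beta$ to $1$ and the image is $111111$, which contains $1111$ and is excluded. For patterns $2$--$5$ only $\alpha=1$ is forced, so the image has the shape $11\cdot \varphi(\beta)\varphi(\beta)\cdot 11$ with $\varphi(\beta)\varphi(\beta)$ a square of $\tau'''$; a short case split on $\varphi(\beta)\in\{1,(01)^n,(10)^n\}$ should either yield $1111$ or force two triples $111$ in $\tau'''$ separated by an \emph{even} number of $0$'s, which is impossible by construction.

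For patterns $6$ and $7$ the initial factor $\alpha\alpha^R$ or $\alpha^R\alpha$ again pins $\alpha=1$ by the same palindrome argument, and the image reads $11\cdot \varphi(\beta)\cdot 1\cdot \varphi(\beta)\varphi(\beta)$; for each admissible $\varphi(\beta)\in\{1,(01)^n,(10)^n\}$ I expect to find $0110$ at one of the internal junctions between blocks. Pattern $1$, namely $\alpha^R\alpha^R\beta\beta\alpha\alpha$, has no even palindrome factor, so the argument there rests only on the square structure: $\varphi(\alpha)\varphi(\alpha)$ and its reverse must both be squares of $\tau'''$, which restricts $\varphi(\alpha)$ (and similarly $\varphi(\beta)$) to $\{1,(01)^k,(10)^k\}$, after which a direct junction analysis between the three square-blocks forces $00$, $0110$, or $1111$ at some boundary.

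The main technical obstacle will be the finite but slightly tedious enumeration of these boundary cases, but because the sets of squares and of even palindromes of $\tau'''$ are so tightly constrained, every case should collapse to a forbidden factor or to an impossible spacing of triples, yielding the lemma.
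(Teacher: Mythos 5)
Your route is the same as the paper's: the same witness $\tau'''$, the same structural facts (aperiodicity; no $00$, $0110$, $1111$; hence $11$ is the only even-length palindrome factor; odd and strictly growing gaps between the blocks $111$), and the same pattern-by-pattern case analysis, with the palindrome argument pinning $\alpha$ (and, for the last two patterns, $\beta$) to the letter $1$. The one concrete problem is your premise that ``$11$, $(01)^{2m}$, $(10)^{2m}$ are its only squares.'' That is false: already in the displayed prefix $0111010101110101\cdots$ the squares $(11101010)^2$ and $(11010101)^2$ occur (windows of length $16$ starting at the second and third letters), and the paper itself records a fourth family of squares, written there as $11(01)^{2\ell}$, namely squares whose root contains a whole block $111$ or a split piece of one. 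Consequently the case splits you announce, $\varphi(\beta)\in\{1,(01)^n,(10)^n\}$ for patterns 2--7 and the restriction of $\varphi(\alpha)$, $\varphi(\beta)$ to $\{1,(01)^k,(10)^k\}$ for pattern 1, are not exhaustive as written.

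The omission is repairable inside your own framework, but the extra cases must be run. If the root begins with $11$ (type $11(01)^n$), the flanking $11$ forced by $\alpha=1$ immediately yields $1111$. If the root contains a full block $111$, then for patterns 2--5 the square $\varphi(\beta)\varphi(\beta)$ can only sit at the unique pair of consecutive blocks whose gap matches, it is preceded by $11$ only when it starts right after the previous block, and then the required trailing $11$ falls inside an alternating gap, a contradiction; for patterns 6 and 7 the factor $\varphi(\beta)\,1\,\varphi(\beta)$ would place two consecutive blocks at an odd distance $|\varphi(\beta)|+1$ apart... no, at distance $|\varphi(\beta)|+1$, which has the wrong parity, since consecutive blocks in $\tau'''$ are $2m_i+2$ positions apart; and for pattern 1 such a root $r$ cannot have both $r^2$ and $(r^R)^2$ as factors, because one requires the gap following the relevant block to be at least the preceding one and the other requires the opposite, contradicting the strictly increasing gap lengths. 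To be fair, the paper's own proof is equally terse about these roots even though it lists them, but since you state the square inventory as an explicit premise, you should correct it and include these cases; otherwise the argument as proposed does not go through.
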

\begin{proof}
Consider the previously defined word $\tau'''$. We note that the only even length palindrome of it is $11$, while all squares have the form $11$, $(01)^{2\ell}$, $(10)^{2\ell}$, and $11(01)^{2\ell}$, for some positive integer $\ell$.

For the pattern $\alpha^R\alpha^R\beta\beta\alpha\alpha$, if the image of $\alpha$ is $(01)^k$, then that of $\alpha\alpha$ is $(01)^{2k}$ and that of $\alpha^R\alpha^R$ is $(10)^{2k}$. It must be then that $\beta$ goes to $(01)^k$ or $(10)^k$, which do not generate factors of $\tau'''$. If the image of $\alpha$ is $1$, then that of $\beta\beta$ must be $(10)^{2k}$ or $(01)^{2k}$. Since none of these works either, we conclude in this case.

For $\alpha^R\alpha\beta\beta\alpha\alpha^R$ and $\alpha^R\alpha\beta\beta\alpha^R\alpha$, we apply the same strategy as above, but here the square produced by $\beta\beta$ is flanked by the same palindrome of length at least 2. Since $\beta$ cannot go to either $(01)^k$  nor $(10)^k$, the conclusion follows.

For $\alpha^R\alpha\beta\beta\alpha\alpha$ and $\alpha\alpha^R\beta\beta\alpha\alpha$, note that $\alpha$ must go to $1$, as this is the only palindrome of even length. However, $\beta$ cannot go to either $(01)^k$  nor $(10)^k$, since between each factor $11$ there is an odd number of $0$s.

Finally, for the last four patterns, again $\alpha$ must replace $1$. The contradiction is immediate for the last two patterns, where $\beta$ has to also have $1$ as image. For the other two, $\alpha \alpha^R \beta \alpha \beta \beta$ and $\alpha^R \alpha \beta \alpha \beta \beta$, if the image of $\beta$ starts or ends with $1$ we get a contradiction as neither $0110$ nor $1111$ are factors of $\tau'''$, and if it starts and ends with $0$ we reach a contradiction as $00$ is not a valid factor.
\qed\end{proof}

As a consequence of pattern divisibility and Lemma~\ref{lem4} we have the following
\begin{lemma}
There exist infinite aperiodic binary words avoiding 
$\alpha^R\alpha^R\beta^R\beta\alpha\alpha$,
$\alpha\alpha^R\beta^R\beta\alpha\alpha$,
$\alpha \beta \alpha^R \beta^R \beta \alpha$,
$\alpha \beta \alpha \beta^R \beta \alpha^R$,
$\alpha \beta \alpha^R \beta \beta^R \alpha$, 
$\alpha \beta \alpha \beta \beta^R \alpha^R$, and
$\alpha \alpha \beta^R \beta \alpha^R$.
\end{lemma}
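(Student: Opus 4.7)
The plan is to reduce all seven cases to Lemma~\ref{lem4} through pattern divisibility, and to show that the word $\tau'''$ from that lemma simultaneously avoids every listed pattern. The key observation is that for any non-empty words $u,v$, the word $uvv^Ru^R$ is an even-length palindrome of length at least $4$, and it is precisely an instance of $\alpha\beta\beta^R\alpha^R$; thus by Lemma~\ref{lem4}, $\tau'''$ contains no such word as a factor.

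For each of the seven patterns $p'$ I will exhibit a contiguous block of four variable-symbols inside $p'$ having the shape $X\,Y\,Y^R\,X^R$, where $X,Y\in\{\alpha,\beta,\alpha^R,\beta^R\}$. Under any non-erasing morphism $\varphi$, that block evaluates to $\varphi(X)\varphi(Y)\varphi(Y)^R\varphi(X)^R$, which is an instance of $\alpha\beta\beta^R\alpha^R$; consequently any occurrence of $p'$ in $\tau'''$ would force an occurrence of $\alpha\beta\beta^R\alpha^R$ in $\tau'''$, contradicting Lemma~\ref{lem4}. The blocks themselves can be read off directly: $\alpha^R\beta^R\beta\alpha$ sits inside $\alpha^R\alpha^R\beta^R\beta\alpha\alpha$, $\alpha\alpha^R\beta^R\beta\alpha\alpha$, and $\alpha\beta\alpha^R\beta^R\beta\alpha$; $\alpha\beta^R\beta\alpha^R$ sits inside $\alpha\beta\alpha\beta^R\beta\alpha^R$ and $\alpha\alpha\beta^R\beta\alpha^R$; $\alpha^R\beta\beta^R\alpha$ sits inside $\alpha\beta\alpha^R\beta\beta^R\alpha$; and $\alpha\beta\beta^R\alpha^R$ is itself a suffix of $\alpha\beta\alpha\beta\beta^R\alpha^R$.

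There is no deep obstacle here; the only thing to be careful about is the reversal bookkeeping, but in every case the chosen block collapses, via $(ab)^R=b^Ra^R$ and $(w^R)^R=w$, to the required palindromic template, and non-emptiness of $\varphi(X)$ and $\varphi(Y)$ is automatic from the non-erasing assumption on $\varphi$. Aperiodicity of the witnessing word is inherited directly from $\tau'''$, so the same single word serves as the witness for all seven patterns at once.
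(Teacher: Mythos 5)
Your proposal is correct and matches the paper's intended argument: the paper derives this lemma exactly as a consequence of pattern divisibility and Lemma~\ref{lem4}, i.e.\ each listed pattern contains a block that instantiates $\alpha\beta\beta^R\alpha^R$, so the word $\tau'''$ avoids them all. Your explicit identification of the blocks $X\,Y\,Y^R\,X^R$ and the reversal bookkeeping simply spells out the divisibility step the paper leaves implicit, and all seven cases check out.
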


As a consequence of pattern divisibility and Lemma~\ref{lem5} we have the following
\begin{lemma}
There exist infinite aperiodic binary words avoiding the patterns
$\alpha^R \alpha \beta^R \beta \alpha$,
$\alpha \alpha^R \beta^R \beta \alpha$,
$\alpha^R\alpha\beta^R\beta\alpha\alpha$,
$\alpha^R\alpha\beta^R\beta\alpha^R\alpha$,
$\alpha^R\alpha\beta^R\beta\alpha\alpha^R$,
\end{lemma}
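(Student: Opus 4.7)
The plan is to read off the result directly from pattern divisibility and Lemma~\ref{lem5}. Recall that Lemma~\ref{lem5} already exhibits the infinite aperiodic binary word $\tau'''$ avoiding $\alpha\alpha^R\beta^R\beta$, so all we have to do is verify that each of the five patterns listed in the statement has $\alpha\alpha^R\beta^R\beta$ as a divisor; avoidance of the divisor then forces avoidance of every one of these patterns in $\tau'''$.

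I would proceed case by case, but each case is essentially the same observation: the length-4 prefix of each listed pattern is either $\alpha\alpha^R\beta^R\beta$ already, or is $\alpha^R\alpha\beta^R\beta$, which is just the previous pattern under the equivalent renaming $\alpha \mapsto \alpha^R$ (so it is the same pattern up to equivalence of patterns as defined in the preliminaries). Concretely:
\begin{itemize}
\item $\alpha^R\alpha\beta^R\beta\alpha$ has prefix $\alpha^R\alpha\beta^R\beta$.
\item $\alpha\alpha^R\beta^R\beta\alpha$ has prefix $\alpha\alpha^R\beta^R\beta$.
\item $\alpha^R\alpha\beta^R\beta\alpha\alpha$ has prefix $\alpha^R\alpha\beta^R\beta$.
\item $\alpha^R\alpha\beta^R\beta\alpha^R\alpha$ has prefix $\alpha^R\alpha\beta^R\beta$.
\item $\alpha^R\alpha\beta^R\beta\alpha\alpha^R$ has prefix $\alpha^R\alpha\beta^R\beta$.
\end{itemize}
Thus for every one of these patterns $p$, any instance $\varphi(p)$ in a word starts with an instance of $\alpha\alpha^R\beta^R\beta$ (taking the new variable $\alpha$ to be $\varphi(\alpha)$ or $\varphi(\alpha^R)$, and the new variable $\beta$ to be $\varphi(\beta)$). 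Hence $\alpha\alpha^R\beta^R\beta \mid p$.

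Combining this with Lemma~\ref{lem5}, since $\tau'''$ contains no occurrence of $\alpha\alpha^R\beta^R\beta$, it a fortiori contains no occurrence of any of the five patterns, and $\tau'''$ is already known to be aperiodic. There is no real obstacle here beyond the bookkeeping of the five cases; the content is entirely in Lemma~\ref{lem5}, and the present lemma is a straightforward corollary via the divisibility relation defined in Section~2.
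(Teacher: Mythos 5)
Your proposal is correct and matches the paper's own argument: the paper proves this lemma exactly as "a consequence of pattern divisibility and Lemma~\ref{lem5}", i.e.\ each listed pattern is divided by $\alpha\alpha^R\beta^R\beta$ (up to the renaming $\alpha\mapsto\alpha^R$, which leaves the pattern language unchanged), so the aperiodic binary word $\tau'''$ from Lemma~\ref{lem5} avoids them all. Your explicit case-by-case check of the length-four prefixes is just the bookkeeping the paper leaves implicit.
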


Following the results in~\cite{CurLaf15} and the result in Lemma~\ref{lem11}, all variations with reversals of the pattern  $\alpha \beta \alpha \beta \alpha$ are avoidad by infinite aperiodic binary words.


\begin{question}
Do there exist infinite aperiodic binary words avoiding the patterns 
$\alpha^R \alpha \beta \beta \alpha$,
$\alpha \alpha^R \beta \beta \alpha$,
$\alpha \alpha \beta \beta \alpha^R$,
$\alpha \alpha \beta \alpha^R \beta \beta$,
$\alpha^R \alpha \beta^R \alpha \beta \beta$,
$\alpha \alpha^R \beta^R \alpha \beta \beta$,
$\alpha \alpha \beta^R \alpha^R \beta \beta$,
$\alpha \beta \alpha^R \beta \beta \alpha$,
$\alpha \beta \alpha \beta \beta \alpha^R$,
$\alpha^R \beta^R \alpha \beta \beta \alpha$, 
$\alpha \beta^R \alpha^R \beta \beta \alpha$,
$\alpha \beta^R \alpha \beta \beta \alpha^R$,
$\alpha^R \beta \alpha \beta^R \beta \alpha$,
$\alpha^R \beta \alpha \beta \beta^R \alpha$,
their reverses, and complements?
\end{question}

We are now ready to state our main result is:

\begin{theorem}
Let $p$ be a binary pattern with reversal. Then $p$ is either unavoidable, or avoidable by an infinite aperiodic word defined on either a binary or a ternary alphabet.
\end{theorem}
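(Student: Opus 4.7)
The plan is to proceed by case analysis, partitioning binary patterns with reversal by their underlying classical pattern $\bar p$, obtained from $p$ by deleting every $R$ symbol, and classified according to Theorem~\ref{fullbinarypatterns}.

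First, suppose $\bar p$ lies in item~\ref{item1}. Then every reversal variant $p$ remains unavoidable: every infinite word contains factors of the form $a x a$ with $a$ a single letter (which equals its own reverse) and $x$ non-empty, and such a factor is an instance of each reversal variant of $\alpha$, $\alpha\beta$, $\alpha\beta\alpha$ under a suitable non-erasing assignment, while $\varepsilon$ is trivially unavoidable.

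Second, suppose $\bar p$ lies in item~\ref{item2} or item~\ref{item3}; then $p$ is avoidable, and I aim for an aperiodic witness via two reductions. (a) If $p$ contains a factor of length at least $2$ formed only from $\alpha$ and $\alpha^R$ (or, symmetrically, only from $\beta$ and $\beta^R$), then item~2 of Theorem~\ref{thm:unary_pat} yields an aperiodic ternary word avoiding this sub-pattern, hence avoiding $p$ by pattern divisibility; item~4 upgrades this to a binary aperiodic witness when the unary sub-pattern has length at least $3$. (b) Otherwise $p$ is \emph{alternating}, and up to equivalence $\bar p$ is a variant of $\alpha\beta\alpha\beta$, $\alpha\beta\alpha\beta\alpha$, or $\alpha\beta\alpha\beta\alpha\beta$. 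The length-$4$ alternating variants are handled by Lemmas~\ref{lem:avoid1}, \ref{lem:aRbab}, \ref{lem4}, and \ref{lem5} through the constructions $\sigma$, $\tau''$, and $\tau'''$; the length-$5$ and length-$6$ variants (alternating, or leftover from case~(a)) are addressed by Lemmas~\ref{lem:aabab}, \ref{lem:aabba}, \ref{lem11}, and \ref{lem12}, using $\tau'''$ and the image under the $6$-uniform morphism $\psi$ of Theorem~\ref{thm:RSW} of a square-free ternary word. Longer patterns are absorbed by pattern divisibility from these short cases.

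The principal obstacle is the alternating subcase~(b), where no unary sub-pattern is available to feed into Theorem~\ref{thm:unary_pat}. The pattern $\alpha^R\beta\alpha\beta$ demonstrates that a ternary alphabet is sometimes genuinely required (no aperiodic binary witness exists), so the ternary construction of Lemma~\ref{lem:aRbab} is essential rather than a convenience. The remainder of the proof is a careful check that the explicit constructions $\tau'$, $\tau''$, $\tau'''$, $\sigma$, and $\psi(w)$ cover every remaining equivalence class under variable renaming, letter complementation, and mirror image; the theorem then follows by compiling the preceding lemmas together with these divisibility arguments.
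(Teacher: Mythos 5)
Your proposal is correct and takes essentially the same route as the paper: the theorem is simply the compilation of Theorem~\ref{fullbinarypatterns} (item~\ref{item1} variants stay unavoidable via the $axa$ argument), the unary-factor reduction through Theorem~\ref{thm:unary_pat} plus pattern divisibility, and the explicit aperiodic constructions (notably Lemmas~\ref{lem:avoid1} and~\ref{lem:aRbab} for the alternating cases), with the remaining lemmas only serving to sharpen ternary witnesses to binary ones. The one slip is organisational rather than logical: Lemmas~\ref{lem4}, \ref{lem5}, \ref{lem:aabab} and~\ref{lem:aabba} treat non-alternating patterns (each contains $\alpha\alpha$, $\beta\beta$ or a variable adjacent to its own reverse), so they really belong to your case~(a), which already covers them.
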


As future work, one of the main things to do next, would be an analysis of the growth functions of the words that avoid all these variations of patters. The best starting point in this direction would be~\cite{CurRam15}, where the authors show that, surprisingly, the growth of the number of words avoiding the pattern $\alpha\alpha\alpha^R$ is between polynomial and exponential. Recently, we have found out from a discussion with James Currie about similar work being done for the pattern $\alpha\alpha^R\alpha$  both in in~\cite{CurRam15b} as well as by Shallit and Du. Again a growth between polynomial and exponential is obtained for the pattern as well.
To this end, we mention that a variety of proving techniques regarding these growth functions is also present in~\cite{MerOchSamShu14}.

\bibliographystyle{abbrv}
\bibliography{references_noEds}

\begin{thebibliography}{10}

\bibitem{BeEhMcN}
D.~R. Bean, A.~Ehrenfeucht, and G.~McNulty.
\newblock Avoidable patterns in strings of symbols.
\newblock {\em Pac. J. Math.}, 85:261--294, 1979.

\bibitem{BisCurNow12}
B.~Bischoff, J.~D. Currie, and D.~Nowotka.
\newblock Unary patterns with involution.
\newblock {\em Int. J. Found. Comput. Sci.}, 23(8):1641--1652, 2012.

\bibitem{FBSLohSco13}
F.~Blanchet-Sadri, A.~Lohr, and S.~Scott.
\newblock Computing the partial word avoidability indices of binary patterns.
\newblock {\em J. Discrete. Algo.}, 23:113--118, 2013.

\bibitem{BSMeRaWi}
F.~Blanchet-Sadri, R.~Merca\c{s}, A.~Rashin, and E.~Willett.
\newblock An answer to a conjecture on overlaps in partial words using
  periodicity algorithms.
\newblock In {\em 3rd LATA}, volume 5457 of {\em LNCS}, pages 188--199, 2009.

\bibitem{BSMeSc09}
F.~Blanchet-Sadri, R.~Merca\c{s}, and G.~Scott.
\newblock A generalization of {T}hue freeness for partial words.
\newblock {\em Theor. Comput. Sci.}, 410(8-10):793--800, 2009.

\bibitem{BSMeSiWe11}
F.~Blanchet-Sadri, R.~Merca\c{s}, S.~Simmons, and E.~Weissenstein.
\newblock Avoidable binary patterns in partial words.
\newblock {\em Acta Inform.}, 48(1):25--41, 2011.
\newblock Erratum: http://dx.doi.org/10.1007/s00236-011-0149-4.

\bibitem{Cas93b}
J.~Cassaigne.
\newblock Unavoidable binary patterns.
\newblock {\em Acta Inform.}, 30:385--395, 1993.

\bibitem{Cas94The}
J.~Cassaigne.
\newblock {\em Motifs \'{e}vitables et r\'{e}gularit\'{e}s dans les mots}.
\newblock Rapport {LITP}, Paris {VI}, 1994.

\bibitem{CurLaf15}
J.~D. Currie and P.~Lafrance.
\newblock Avoidability index for binary patterns with reversal.
\newblock {\em CoRR}, abs/1508.02101, 2015.

\bibitem{CurManNow15}
J.~D. Currie, F.~Manea, and D.~Nowotka.
\newblock Unary patterns with permutations.
\newblock In {\em 19th {DLT}}, volume 9168 of {\em LNCS}, pages 191--202, 2015.

\bibitem{CurRam15b}
J.~D. Currie and N.~Rampersad.
\newblock Binary words avoiding $xx^rx$ and strongly unimodal sequences.
\newblock {\em CoRR}, abs/1508.02964, 2015.

\bibitem{CurRam15}
J.~D. Currie and N.~Rampersad.
\newblock Growth rate of binary words avoiding $xxx^r$.
\newblock {\em CoRR}, abs/1502.07014, 2015.

\bibitem{DaMaMeMue14}
J.~Dassow, F.~Manea, R.~Merca\c{s}, and M.~M\"uller.
\newblock Inner palindromic closure.
\newblock {\em Int. J. Found. Comput. Sci.}, 25(8):1049--1064, 2014.

\bibitem{DuMoScSh14}
C.~F. Du, H.~Mousavi, L.~Schaeffer, and J.~Shallit.
\newblock Decision algorithms for {F}ibonacci-automatic words, with
  applications to pattern avoidance.
\newblock {\em CoRR}, abs/1406.0670, 2014.

\bibitem{FicZam13}
G.~Fici and L.~Q. Zamboni.
\newblock On the least number of palindromes contained in an infinite word.
\newblock {\em Theor. Comput. Sci.}, 481:1--8, 2013.

\bibitem{Loth02}
M.~Lothaire.
\newblock {\em Algebraic Combinatorics on Words}.
\newblock Cambridge University Press, 2002.

\bibitem{MaMe}
F.~Manea and R.~Merca\c{s}.
\newblock Freeness of partial words.
\newblock {\em Theor. Comput. Sci}, 389(1-2):265--277, 2007.

\bibitem{ManMueNow12}
F.~Manea, M.~M{\"u}ller, and D.~Nowotka.
\newblock The avoidability of cubes under permutations.
\newblock In {\em 16th {DLT}}, volume 7410 of {\em LNCS}, pages 416--427, 2012.

\bibitem{MerOchSamShu14}
R.~Merca\c{s}, P.~Ochem, A.~V. Samsonov, and A.~Shur.
\newblock Binary patterns in binary cube-free words: Avoidability and growth.
\newblock {\em RAIRO - Theor. Inf. Appl.}, 48(4):369--389, 2014.

\bibitem{Och06}
P.~Ochem.
\newblock A generator of morphisms for infinite words.
\newblock {\em RAIRO - Theor. Inf. Appl.}, 40(3):427--441, 2006.

\bibitem{Pan83}
J.-J. Pansiot.
\newblock A propos d'une conjecture de {F}. {D}ejean sur les r\'ep\'etitions
  dans les mots.
\newblock In {\em 10th {ICALP}}, pages 585--596, 1983.

\bibitem{RampSha05}
N.~Rampersad and J.~Shallit.
\newblock Words avoiding reversed subwords.
\newblock {\em J. Combin. Math. Combin. Comput.}, 54:157--164, 2005.

\bibitem{RaShWa05}
N.~Rampersad, J.~Shallit, and M.-W. Wang.
\newblock Avoiding large squares in infinite binary words.
\newblock {\em Theor. Comput. Sci.}, 339(1):19--34, 2005.

\bibitem{Rot92}
P.~Roth.
\newblock Every binary pattern of length six is avoidable on the two-letter
  alphabet.
\newblock {\em Acta Inform.}, 29(1):95--107, 1992.

\bibitem{Sch86}
U.~Schmidt.
\newblock {\em Motifs in\'{e}vitables dans les mots}.
\newblock Rapport {LITP} 86--63, Paris {VI}, 1986.

\bibitem{Sch89}
U.~Schmidt.
\newblock Avoidable patterns on two letters.
\newblock {\em Theor. Comput. Sci.}, 63(1):1--17, 1989.

\bibitem{Thu06}
A.~Thue.
\newblock \"{U}ber unendliche {Z}eichenreihen.
\newblock {\em Norske Vid. Selsk. Skr. I, Mat. Nat. Kl. Christiana}, 7:1--22,
  1906.

\bibitem{Thu12}
A.~Thue.
\newblock \"{U}ber die gegenseitige {L}age gleicher {T}eile gewisser
  {Z}eichenreihen.
\newblock {\em Norske Vid. Selsk. Skr. I, Mat. Nat. Kl. Christiana}, 1:1--67,
  1912.

\bibitem{Zim84}
A.~I. Zimin.
\newblock Blocking sets of terms.
\newblock {\em Math. USSR-Sb.}, 47:353--364, 1984.

\end{thebibliography}
\end{document}